\definecolor{gnred}{RGB}{255,91,89}
\definecolor{gnred1}{RGB}{71,0,0} 
\definecolor{gnred2}{RGB}{117,0,0} 
\definecolor{gnred3}{RGB}{164,0,0} 
\definecolor{gnred4}{RGB}{211,0,0} 
\definecolor{gnred5}{RGB}{255,0,0} 
\definecolor{gnred6}{RGB}{255,42,34} 
\definecolor{gnred7}{RGB}{255,91,89} 
\definecolor{gnblue1}{RGB}{0,36,71}   
\definecolor{gnblue2}{RGB}{0,60,118}  
\definecolor{gnblue3}{RGB}{0,85,164}
\definecolor{gnblue4}{RGB}{0,108,212}
\definecolor{gnblue4}{RGB}{0,108,212}
\definecolor{gnblue5}{RGB}{0,133,255}  
\definecolor{gnblue6}{RGB}{35,156,255} 
\definecolor{gnblue7}{RGB}{88,177,255} 
\definecolor{gnbrown1}{RGB}{71,27,0}  
\definecolor{gnbrown2}{RGB}{117,45,0} 
\definecolor{gnbrown3}{RGB}{164,62,0} 
\definecolor{gnbrown4}{RGB}{211,80,0} 
\definecolor{gnbrown5}{RGB}{255,97,0} 
\definecolor{gnbrown6}{RGB}{255,127,26} 
\definecolor{gnbrown7}{RGB}{255,155,86} 
\newcommand\Item[1][]{%
  \ifx\relax#1\relax  \item \else \item[#1] \fi
  \abovedisplayskip=0pt\abovedisplayshortskip=0pt~\vspace*{-\baselineskip}}
\newtheoremstyle{ieeeconf}
  {0pt}   
  {0pt}   
  {\normalfont}  
  {\parindent}       
  {\itshape} 
  {:}         
  { } 
  {\thmname{#1} \thmnumber{#2}\thmnote{ (#3)}} 
\renewenvironment{proof}[1][\proofname]{\par
  \pushQED{\qed}%
  \normalfont \topsep\z@
  \trivlist
  \item[\hskip2em
        \itshape
    #1\@addpunct{:}]\ignorespaces
}{%
  \popQED\endtrivlist\@endpefalse
}
\theoremstyle{ieeeconf}
\newtheorem{definition}{Definition}
\newtheorem{thm}{Theorem}[section]
\newtheorem{lem}[thm]{Lemma}
\newtheorem{pro}[thm]{Proposition}
\newtheorem{cor}[thm]{Corollary}
\newtheorem{remark}{Remark}
\newtheorem{assumption}{Assumption}
\newenvironment{rem}{\begin{remark}}{\qed\end{remark}}
\newtheorem*{metatheorem}{Informal statement}
\newcommand{\bd}{\begin{definition}} 
\newcommand{\ed}{\end{definition}} 
\newcommand{\bp}{\begin{pro}} 
\newcommand{\ep}{\end{pro}} 
\newcommand{\bt}{\begin{thm}} 
\newcommand{\et}{\end{thm}}
\newcommand{\blm}{\begin{lem}} 
\newcommand{\elm}{\end{lem}}
\let\eps\varepsilon
\newcommand{\Z}{\mathbb{Z}}
\newcommand{\R}{\mathbb{R}}
\newcommand{\mcS}{\mathcal{S}}
\newcommand{\OF}{\mathsf{F}}
\newcommand{\OG}{\mathsf{G}}
\DeclareMathOperator{\rank}{\operatorname{rank}}
\newcommand{\realextended}{\overline{\R}}
\newcommand{\bi}{\begin{itemize}} 
\newcommand{\ei}{\end{itemize}} 
\newcommand{\bds}{\begin{description}} 
\newcommand{\eds}{\end{description}} 
\newcommand{\beq}{\begin{equation}} 
\newcommand{\eeq}{\end{equation}} 
\newcommand{\abs}[1]{\left|#1\right|}
\newcommand{\prox}[1]{\mathrm{prox}_{#1}}
\newcommand{\norm}[1]{\|#1\|}
\newcommand{\normtwo}[1]{\|#1\|_2}
\newcommand{\derp}[2]{\frac{\partial #1}{\partial #2}}
\newcommand{\trasp}[1]{#1^{\textsf{T}}}
\newcommand{\diag}[1]{[#1]}
\newcommand{\Biggnorm}[1]{\Bigg\|#1\Bigg\|}
\newcommand{\lognorm}[2]{\mu_{#1}(#2)}
\renewcommand{\lognorm}[2]{\mu_{#2}(#1)} 
\newcommand{\wlognorm}[3]{\mu_{#1,#2}\bigl(#3\bigr)}
\newcommand{\Bigglognorm}[2]{\mu_{#2}\Biggl(#1\Biggr)}
\newcommand{\until}[1]{\{1,\dots, #1\}}
\newcommand{\subscr}[2]{#1_{\textup{#2}}}
\newcommand{\supscr}[2]{#1^{\textup{#2}}}
\newcommand{\setdef}[2]{\{#1 \; | \; #2\}}
\newcommand{\map}[3]{#1\colon #2 \rightarrow #3}
\DeclareMathOperator*{\argmin}{argmin}
\newcommand{\ceil}[1]{\lceil #1 \rceil}
\newcommand{\Bigceil}[1]{\left\lceil #1 \right\rceil}
\newcommand{\ds}{\displaystyle}
\newcommand{\xstar}{x^{\star}}
\newcommand{\e}{\mathrm{e}}
\newcommand{\FNN}{FNN\xspace}
\newcommand{\lmax}{\alpha(W)}
\newcommand{\hop}{\subscr{x}{H}}
\newcommand{\fr}{\subscr{x}{F}}
\newcommand{\hopdot}{\subscr{\dot x}{H}}
\newcommand{\frdot}{\subscr{\dot x}{F}}
\newcommand{\uh}{\subscr{u}{H}}
\newcommand{\ufr}{\subscr{u}{F}}
\newcommand{\Q}{Q}
\newcommand{\fsplit}{\theta}
\newcommand{\na}{\subscr{n}{a}}
\newcommand{\nia}{\subscr{n}{ia}}
\newcommand{\fcn}{FCN\xspace}
\newcommand{\pfcn}{PFCN\xspace}
\newcommand{\fpfcn}{\subscr{f}{\pfcn}}
\newcommand{\Elasso}{\subscr{E}{L}}
\newcommand{\nondiffset}[1]{\Omega_{#1}}
\newcommand{\odeflowtx}[2]{\phi_{#1}\bigl({#2}\bigr)}
\newcommand{\xtmp}{\subscr{x}{tmp}}
\newcommand{\qw}{D}
\newcommand{\qs}{S}
\newcommand{\radius}{r}
\newcommand{\radiusstrong}{q}
\newcommand{\tld}{\subscr{t}{cross}}
\newcommand{\eqnorm}{k}
\newcommand{\prodeqnorm}{k}
\newcommand{\ce}{\subscr{c}{exp}}
\newcommand{\cl}{\subscr{c}{lin}}
\newcommand{\loc}{\textup{L}}
\newcommand{\glo}{\textup{G}}
\newcommand{\kab}{\eqnorm_\alpha^\beta}
\newcommand{\kba}{\eqnorm_\beta^\alpha}
\newcommand{\ratiolg}{\prodeqnorm_{\loc,\glo}}
\newcommand{\glbnorm}[1]{\|#1\|_{\glo}}
\newcommand{\ball}[2]{B^{#1}_{#2}}
\newcommand{\softt}[2]{\operatorname{soft}_{#1}({#2})}
\newcommand{\soft}[1]{\operatorname{soft}_{#1}}
\newcommand{\sign}[1]{\operatorname{sign}({#1})}
\newcommand{\relu}{\operatorname{ReLU}}
\newcommand{\1}{\mbox{\fontencoding{U}\fontfamily{bbold}\selectfont1}}
\newcommand{\0}{\mbox{\fontencoding{U}\fontfamily{bbold}\selectfont0}}
\newcommand\oprocendsymbol{\hbox{$\triangle$}}
\newcommand\oprocend{\relax\ifmmode\else\unskip\hfill\fi\oprocendsymbol}
\title{Positive Competitive Networks for Sparse Reconstruction}
\author{Veronica Centorrino$^a$, Anand Gokhale$^b$, Alexander Davydov$^b$,\\
Giovanni Russo$^c$, and Francesco Bullo$^b$
\thanks{$^a$Veronica Centorrino is with Scuola Superiore Meridionale, Italy. {\tt\small veronica.centorrino@unina.it.}}
\thanks{$^b$Anand Gokhale, Alexander Davydov, and Francesco Bullo are with the Center for Control, Dynamical 
Systems, and Computation, UC Santa Barbara, Santa Barbara, CA 93106 USA. {\tt\small 
anand\_gokhale@ucsb.edu, davydov@ucsb.edu, bullo@ucsb.edu}.}
\thanks{$^c$Giovanni Russo is with the Department of Information and Electric Engineering and Applied Mathematics, University of Salerno, Italy. {\tt\small giovarusso@unisa.it.}}
}
\date{}
\begin{document}
\pagestyle{plain} 

\maketitle

\begin{abstract}
\normalsize
We propose and analyze a continuous-time firing-rate neural network, the positive firing-rate competitive network (\pfcn), to tackle sparse reconstruction problems with non-negativity constraints. These problems, which involve approximating a given input stimulus from a dictionary using a set of sparse (active) neurons, play a key role in a wide range of domains, including for example neuroscience, signal processing, and machine learning. First, by leveraging the theory of proximal operators, we relate the equilibria of a family of continuous-time firing-rate neural networks to the optimal solutions of sparse reconstruction problems. Then, we prove that the \pfcn is a positive system and give rigorous conditions for the convergence to the equilibrium. Specifically, we show that the convergence: (i) only depends on a property of the dictionary; (ii) is linear-exponential, in the sense that initially the convergence rate is at worst linear and then, after a transient, it becomes exponential. We also prove a number of technical results to assess the contractivity properties of the neural dynamics of interest. Our analysis leverages contraction theory to characterize the behavior of a family of firing-rate competitive networks for sparse reconstruction with and without non-negativity constraints. Finally, we validate the effectiveness of our approach via a numerical example.
\end{abstract}

\section{Introduction}
Sparse reconstruction (SR) or sparse approximation problems are ubiquitous in a wide range of domains spanning, e.g., neuroscience, signal processing, compressed sensing, and machine learning~\cite{EJC-MBW:08, JW-AYY-AG-SSS-YM:08, ME-MATF-YM:10, JW-YM:22}. These problems involve approximating a given input stimulus from a dictionary, using a set of sparse (active) units/neurons. Over the past years, an increasing body of theoretical and experimental evidence~\cite{DHH-TNW:68, HBB:72, DJF:87, BAO-DJF:96, BAO-DJF:97, BAO-DJF:04} has grown to support the use of sparse representations in neural systems. In this context, we propose (and characterize the behavior of) a novel family of continuous-time, firing-rate neural networks (FNNs) that we show tackle SR problems. Due to their biological relevance, we are particularly interested in SR problem with non-negativity constraints and, to solve these problems, we propose the \emph{positive firing-rate competitive network}. This is an FNN whose state variables have the desirable, biologically plausible, property of remaining non-negative.

Historically, understanding representation in neural systems has been a key research challenge in neuroscience. The evidence that many sensory neural systems employ SR traces back to the pioneering work by Hubel and Wiesel, where it is shown that the responses of simple-cells in the mammalian visual cortex (V1) can be described as a linear filtering of the visual input~\cite{DHH-TNW:68}.
This insight was further expanded upon by Barlow, who hypothesized that sensory neurons aim to encode an accurate representation of the external world using the fewest active neurons possible~\cite{HBB:72}.
Subsequently, Field showed that simple-cells in V1 efficiently encode natural images using only a sparse fraction of active units~\cite{DJF:87}.
Then, Olshausen and Field proposed that biological vision systems encode sensory input data and showed that a neural network trained to reconstruct natural images with sparse activity constraints develops units with properties similar to those found in V1~\cite{BAO-DJF:96, BAO-DJF:97}. These ideas have since gained substantial support from studies on different animal species and the human brain~\cite{BAO-DJF:04}.

Formally, the SR problem can be formulated as a composite minimization problem\footnote{A composite minimization problem refers to an optimization task that involves minimizing a function composed of the sum of a differentiable and a non-differentiable component, typically combining a smooth loss function with a non-smooth regularization term.} given by a least squares optimization problem regularized with a sparsity-inducing penalty function.
While traditional optimization methods rely on discrete algorithms, recently an increasing number of continuous-time recurrent neural networks (RNNs) have been used to solve optimization problems. Essentially, these RNNs are continuous-time dynamical systems converging to an equilibrium that is also the optimizer of the problem. Consequently, much research effort has been devoted to  characterizing the stability of those systems and their convergence rates~\cite{KJA-LH-HU:58, JJH-DWT:85, AB-TRP:93, VC-AG-AD-GR-FB:23c, AD-VC-AG-GR-FB:23f}.
In fact, the use of continuous-time RNNs to solve optimization problems and research on stability conditions for these RNNs has gained considerable interest in a wide range of fields, see, e.g.,~\cite{HZ-ZW-DL:14, AB-JR-CJR:12, AB-CJR-JR:13b}.
An RNN designed to tackle the SR problem is the Locally Competitive Algorithm (LCA) introduced by Rozell et al.~\cite{CJR-DHJ-RGB-BAO:08}.
This network is a continuous-time Hopfield-like neural network~\cite{JJH:84} (HNN) of the form:
\begin{align}
\hopdot(t) = -\hop(t) + W \Psi(\hop(t))+\uh(t),
\label{eq:hopfield_nn}
\end{align}
with output $y(t) = \Psi(\hop(t))$. In~\eqref{eq:hopfield_nn} the state $\hop\in\R^n$ is usually interpreted as a membrane potential of the $n$ neurons in the network, $W\in\R^{n\times n}$ is a symmetric synaptic matrix, $\map{\Psi}{\R^n}{\R^n}$ is a diagonal activation function, and $\uh$ is an external stimulus.
Following~\cite{CJR-DHJ-RGB-BAO:08}, several results were established to analyze the properties of the LCA. Specifically, in~\cite{AB-CJR-JR:11} it is proven that, provided that the fixed point of the LCA is unique then, for a certain class of activation functions, the LCA globally asymptotically converges.
Then, in~\cite{AB-JR-CJR:12} it is shown that the fixed points of the LCA coincide with the solutions of the SR problem. Using a Lyapunov approach, under certain conditions on the activation function and on the solutions of the systems, it is also shown that the LCA converges to a single fixed point with exponential rate of convergence.
Various sparsity-based probabilistic inference problems are shown to be implemented via the LCA in~\cite{ASC-PG-CJR:12}.
In~\cite{AB-CJR-JR:13b} a technique using the Łojasiewicz inequality is used to prove convergence of both the output and state variables of the LCA.
\cite{AB-CJR-JR:13},~\cite{AB-CJR-JR:15},~\cite{MZ-CJR:13} focus on analyzing the LCA for the SR problem with $\ell_1$ sparsity-inducing penalty function. Specifically, the convergence rate is analyzed in~\cite{AB-CJR-JR:13}.
In~\cite{AB-CJR-JR:15} it is rigorously shown how the LCA can recover a time-varying signal from streaming compressed measurements.
Additionally, physiology experiments in~\cite{MZ-CJR:13} demonstrate that numerous response properties of non-classical receptive field (nCRF) can be reproduced using a model having the LCA as neural dynamics with an additional non-negativity  constraint enforced on the output to represent the instantaneous spike rate of neurons within the population.
We also note that, while the LCA is biologically inspired, as noted in, e.g.,~\cite{DL-CP-DBC:22} a biologically plausible network should exhibit non-negative states and this property is not guaranteed by the LCA.

Motivated by this, we consider positive SR problems, i.e., a class of SR problems with non-negativity constraints on the states and to tackle these problems we introduce the \emph{positive firing-rate competitive network} (\pfcn).
This is an FNN of the form (see, e.g.,~\cite{PD-LFA:05})
\begin{align}
\frdot(t) = - \fr(t) + \Psi(W\fr(t) + \ufr(t)),
\label{eq:firing_rate_nn}
\end{align}
with output $y(t) = \fr(t)$ and where the state $\fr \in\R^n$ is interpreted as the firing-rate of the $n$ neurons in the network, and, as in~\eqref{eq:hopfield_nn}, $W$ is the synaptic matrix, $\Psi$ is the activation function, and $\ufr$ is an external stimulus (or input).

The HNN~\eqref{eq:hopfield_nn} and FNN~\eqref{eq:firing_rate_nn} are known to be mathematical equivalent~\cite{KDM-FF:12} through suitably defined state and input transformations.
However, the input transformation is state dependent precisely when the synaptic matrix is rank deficient (as in sparse reconstruction) and, counter-intuitively, the transformation of solutions from HNN to FNN requires that the initial condition of the input depends on the initial condition of the state.
Moreover, the FNN might hold an advantage over the HNN in terms of biological plausibility in the following sense. When the activation function is non-negative, the positive orthant is forward-invariant, i.e., the state remains non-negative from non-negative initial conditions and is thus interpreted as a vector of firing-rates.
Therefore, even if the HNN state can be interpreted as a vector of membrane potentials, it is more natural to interpret negative (resp.~positive) synaptic connections as inhibitory (resp.~excitatory) in the FNN rather than the HNN.

To the best of our knowledge, the \pfcn is the first RNN to tackle positive SR problems and with our main results we characterize the behavior of this network, showing that it indeed solves this class of problems. Our analysis leverages contraction theory~\cite{WL-JJES:98} and, in turn, this allows us to also characterize the behavior of the \emph{firing-rate competitive network} (\fcn), i.e., a firing-rate version of the LCA, able to tackle the SR problem. Our use of contraction theory is motivated by the fact that contracting dynamics are robustly stable and enjoy many properties, such as certain types of input-to-state stability. For further details, we refer to the recent monograph~\cite{FB:23-CTDS} and to, e.g., works on recent applications of contraction theory in computational biology, neuroscience, and machine learning~\cite{GR-MDB-EDS:10a, AD-AVP-FB:21k, VC-FB-GR:22g, LK-ME-JJES:22}.
Our key technical contributions can then be summarized as follows:

\begin{enumerate}
\item\label{contrib:1} We propose, and analyze, the \emph{firing-rate competitive network} and the \emph{positive firing-rate competitive network} to tackle the SR and positive SR problem, respectively. 
First, we introduce a result relating the equilibria of the proposed networks to the optimal solutions of sparse reconstruction problems.
Then, we characterize the convergence of the dynamics towards the equilibrium. For the \pfcn, we also show that this is a positive system, i.e., if the system starts with non-negative initial conditions, its state variables remain non-negative. 
After characterizing the local stability and contractivity for the dynamics of our interest, with our main convergence result we prove that, under a standard assumption on the dictionary, our dynamics converges linear-exponentially to the equilibrium, in the sense that (in a suitably defined norm) the trajectory's distance from the equilibrium is initially upper bounded by a linear function and then convergence becomes exponential.  
We also give explicit expressions for the average linear decay rate and the time at which exponential convergence begins.

\item To achieve~\ref{contrib:1}, we propose a top/down normative framework for a biologically-plausible explanation of neural circuits solving sparse reconstruction and other optimization problems. To do so, we leverage tools from monotone operator theory~\cite{PLC-JCP:11,NP-SB:14} and, in particular, the recently studied \emph{proximal gradient dynamics}~\cite{SHM-MRJ:21, AD-VC-AG-GR-FB:23f}. 
This general theory explains how to transcribe a composite optimization problem into a continuous-time firing-rate neural network, which is therefore interpretable.

\item Our analysis of the \fcn and \pfcn dynamics naturally leads to the study of the convergence of \emph{globally-weakly and locally-strongly contracting systems}.
These are dynamics that are weakly infinitesimally contracting on $\R^n$ and strongly infinitesimally contracting on a subset of $\R^n$ (see Section~\ref{sec:contraction_theory} for the definitions).
We then conduct a comprehensive convergence analysis of this class of dynamics, which generalizes the linear-exponential convergence result for the \fcn and \pfcn to a broader setting. We also provide a useful technical result on the $\ell_2$ logarithm norm of upper triangular block matrices.
\item Finally, we illustrate the effectiveness of our results via numerical experiments. The code to replicate our numerical examples is available at \url{https://tinyurl.com/PFCN-for-Sparse-Reconstruction}.
\end{enumerate}
The rest of the paper is organized as follows. In Section 2, we provide some useful mathematical preliminaries: an overview of the SR problem, norms, and logarithmic norms definitions and results, and a review of contraction theory. In Section 3, we present the main results of the paper: we propose the \fcn and the \pfcn, establish the equivalence between the optimal solution of the SR problem and the equilibria of the \fcn, and prove the linear-exponential convergence behavior of our models.
In Section 4, we illustrate the effectiveness of our approach via a numerical example. In Section 5, we analyze the convergence of globally-weakly and locally-strongly contracting systems, showing linear-exponential convergence behavior of these systems.
We provide a final discussion and future prospects in Section 7. Finally, in the appendices, we provide instrumental results and review concepts useful for our analysis.

\section{Mathematical Preliminaries}
\subsection{Notation}
We denote by $\1_n$, $\0_n \in \R^n$ the all-ones and all-zeros vectors, respectively. We denote by $\ball{p}{\radius}:= \setdef{z\in\R^n}{\norm{z-x}_p\leq\radius}$ the ball of radius $\radius >0$ centered at some $x \in \R^n$ and whose distance with respect to (w.r.t.) the center is computed w.r.t. the norm $p$. We specify the center of $\ball{p}{\radius}$ when $x \neq \0_n$. We let $\diag{x} \in \R^{n \times n}$ be the diagonal matrix with diagonal entries equal to $x$ and $I_n$ be the $n \times n$ identity matrix.
For $A \in \R^{n \times n}$ we denote by $\rank(A)$ its rank, and by $\alpha(A) := \max \setdef{\operatorname{Re}(\lambda)}{\lambda \text{ eigenvalue of } A}$ its \emph{spectral abscissa}, where $\operatorname{Re}(\lambda)$ denotes the real part of $\lambda$. Given symmetric $A, B \in \R^{n \times n}$, we write $A \preceq B$ (resp. $A \prec B$) if $B-A$ is positive semidefinite (resp. definite). The function $\map{\ceil{ \ }}{\R}{\Z}$ is the \emph{ceiling function} and is defined by $\ceil{x} = \min\setdef{y \in \Z}{x \leq y}$. The \emph{subdifferential of $\map{g}{\R^n}{\R}$ at $x \in \R^n$} is the set $\partial g(x) := \setdef{z \in \R^n}{g(x) - g(y) \geq z^\top(x-y), \forall y \in \R^n}$.
Finally, whenever it is clear from the context, we omit to specify the dependence of functions on time $t$.

\subsection{The Sparse Reconstruction Problems}
\label{sec:sparse_representation}
Given a $m$-dimensional input $u\in \R^m$ (e.g., a $m$-pixel image), the \emph{sparse reconstruction} problem consists in reconstructing $u$ with a linear combination of a sparse vector $y\in \R^n$ and a \emph{dictionary} $\Phi\in \R^{m\times n}$ composed of $n$ (unit-norm) vectors $\Phi_i\in \R^m$ (see Figure~\ref{fig:frlca}.b)).
Following~\cite{BAO-DJF:97}, \emph{sparse reconstruction problems} can be formulated as follows:
\begin{align}
\label{eq:sparse_approx_C}
\min_{y \in \R^n} \Bigl( E(y) := \frac{1}{2}\big\|u- \Phi y\big\|^2_2 + \lambda S(y)\Bigr),
\end{align}
where $\lambda \geq 0$ is a scalar parameter that controls the trade-off between accurate reconstruction error (the first term) and sparsity (the second term). Indeed, in~\eqref{eq:sparse_approx_C} $\map{S}{\R^n}{\R}$ is a non-linear cost function that induces sparsity and is typically assumed to be convex and separable across indices, i.e., $S(y) = \sum_{i = 1}^n s(y_i)$, for all $y \in \R^n$, with $\map{s}{\R}{\R}$.
Using the definition of $\ell_2$ norm we can write
$$
E(y) = \frac{1}{2}(\trasp{u}u- 2\trasp{u}\Phi y + \trasp{y}\trasp{\Phi}\Phi y) + \lambda S(y).
$$
The matrix $\trasp{\Phi}\Phi \in \R^{n\times n}$ is known as $\emph{Gramian matrix}$ of $\Phi$.
Note that, when $S$ is convex and $\rank(\Phi) = n$, the objective function $E(y)$ is strongly convex, therefore~\eqref{eq:sparse_approx_C} admits a unique solution. While, when $\rank(\Phi) < n$, $E(y)$ is not strongly convex, leading to multiple solutions. Specifically, when $n > m$, we must have $\rank(\Phi) < n$. SR problems focus on the underdetermined case, i.e., when $n \gg m$.

A common choice of $S$ is the $\ell_1$ norm, resulting in the following formulation of~\eqref{eq:sparse_approx_C}, known as \emph{basis pursuit denoising} or \emph{lasso}:
\begin{align}
\label{eq:sparse_approx_l1}
\min_{y \in \R^n} \Bigl(\Elasso(y) := \frac{1}{2}\big\|u- \Phi y\big\|^2_2 + \lambda \|y\|_1\Bigr).
\end{align}
For problem~\eqref{eq:sparse_approx_l1}, accurate reconstruction of $u$ is possible under the condition that $u$ is sparse enough and the dictionary satisfies the following:
\bd[$k$-sparse vector and RIP~\cite{EJC-TT:07}]
Let $k < n$ be natural numbers. A vector $x \in \R^n$ is \emph{$k$-sparse} if it has at most $k$ non-zero entries. A matrix $\Phi \in \R^{n \times m}$ satisfies the \emph{restricted isometry property (RIP)} of order $k$ if there exist a constant $\delta \in [0, 1)$, such that for all $k$-sparse $x \in \R^{n}$ we have
\beq
\label{eq:rip}
(1-\delta)\normtwo{x}^2 \leq \normtwo{\Phi x}^2 \leq (1+\delta) \normtwo{x}^2.
\eeq
The \emph{order-$k$ restricted isometry constant} $\delta_k$ is the smallest $\delta$ such
that~\eqref{eq:rip} holds.
\ed

We are particularly interested in~\eqref{eq:sparse_approx_l1} when this has non-negative constraints. We term this problem the \emph{positive sparse reconstruction problem} and the goal is to reconstruct an input $u$ using a linear combination of a non-negative and sparse vector $y\in \R^n_{\geq0}$ and a unit-norm \emph{dictionary} $\Phi\in \R^{m\times n}$. Formally, the positive sparse reconstruction problem can be stated as follows:
\beq
\label{eq:positive_E_lasso+constraint}
\begin{split}
\min_{y \in \R^n} &\, \frac{1}{2}\big\|u- \Phi y\big\|^2_2 + \lambda \|y\|_1,\\
\text{s.t.}&\, y \in \R^n_{\geq0}.
\end{split}
\eeq
The minimization problem~\eqref{eq:positive_E_lasso+constraint} can equivalently be written as the unconstrained optimization problem
\beq
\label{eq:positive_E_lasso_unconstrained}
\min_{y \in \R^n} \frac{1}{2}\big\|u- \Phi y\big\|^2_2 + \lambda \|y\|_1 + \iota_{\R^n_{\geq0}}(y),
\eeq
where $\map{\iota_{\R^n_{\geq0}}}{\R^n}{[0,+\infty]}$ is the \emph{zero-infinity indicator function on $\R^n_{\geq0}$} and is defined by $\iota_{\R^n_{\geq0}}(x) = 0$ if $x \in \R^n_{\geq0}$ and $\iota_{\R^n_{\geq0}}(x) = +\infty$ otherwise.

We note that problem~\eqref{eq:positive_E_lasso_unconstrained} can be formally written as problem~\eqref{eq:sparse_approx_C} when the sparsity inducing cost in~\eqref{eq:sparse_approx_C} is $S_1(y):=\|y\|_1 + \frac{1}{\lambda}\iota_{\R^n_{\geq0}}(y) = \sum_{i=1}^n \bigl(y_i + \frac{1}{\lambda}\iota_{\R_{\geq0}}(y_i)\bigr)$, where we used the fact that $y$ must belong to $\R_{\geq 0}^n$. Also, for our derivations, it is useful to introduce the scalar function $s_1(y_i) := y_i + \frac{1}{\lambda}\iota_{\R_{\geq0}}(y_i)$.
\subsection{Norms and Logarithmic Norms}
\label{Norms and logarithmic norms}
Given two vector norms $\norm{\cdot}_{\alpha}$ and $\norm{\cdot}_{\beta}$ on $\R^n$ there exist positive
\emph{equivalence coefficients} $k_\alpha^\beta>0$ and $k_\beta^\alpha>0$ such that
\beq
\label{eq:equivalence_coeff_norms}
\norm{x}_{\alpha}\leq\kab\norm{x}_{\beta}, \quad \norm{x}_{\beta}\leq\kba\norm{x}_{\alpha}, \quad \textup{for all } x\in\R^n.
\eeq

For later use, we give the following

\smallskip
\bd[Equivalence ratio between two norms]
\label{def:equivalence_ratio}
Given two norms $\norm{\cdot}_{\alpha}$ and $\norm{\cdot}_{\beta}$, let $k_{\alpha}^{\beta}$ and $k_{\beta}^{\alpha}$ be the minimal coefficients satisfying~\eqref{eq:equivalence_coeff_norms}. The \emph{equivalence ratio between $\norm{\cdot}_{\alpha}$ and $\norm{\cdot}_{\beta}$} is $\prodeqnorm_{\alpha,\beta} :=\kab\kba$.
\ed

\smallskip
Let $\| \cdot \|$ denote both a norm on $\R^n$ and its corresponding induced matrix norm on $\R^{n \times n}$. 
Given $x\in \R^n$ and $A \in \R^{n \times n}$ we recall that the $\ell_2$ vector norm and matrix norm are, respectively, ${\displaystyle \norm{x}_2^2 = x^\top x}$, $\displaystyle {\norm{A}_2^2 = \subscr{\lambda}{max}(A^\top A)}$. The \emph{logarithmic norm} (log-norm) induced by the $\ell_2$ norm is $\displaystyle \lognorm{A}{2} = \subscr{\lambda}{max}\bigl((A + A^\top)/{2}\bigr)$.
For an invertible $\Q \in \R^{n\times n}$, the $\Q$-weighted $\ell_2$ matrix norm is $\norm{A}_{2,Q} = \norm{Q A Q^{-1}}_2$. The corresponding log-norm is $ \wlognorm{2}{Q}{A} = \lognorm{Q A Q^{-1}}{2}$. 
\subsection{Contraction Theory for Dynamical Systems}
\label{sec:contraction_theory}
Consider a dynamical system 
\beq
\label{eq:dynamical_system}
\dot{x}(t) = f\bigl(t,x(t)\bigr),
\eeq
where $\map{f}{\R_{\geq 0} \times C}{\R^n}$, is a smooth nonlinear function with $C\subseteq \R^n$ forward invariant set for the dynamics. We let $t \mapsto \odeflowtx{t}{x(0)}$ be the flow map of~\eqref{eq:dynamical_system} starting from initial condition $x(0)$.
First, we give the following:
\bd[Contracting systems~\cite{FB:23-CTDS}] \label{def:contracting_system}
Given a norm $\norm{\cdot}$ with associated log-norm $\mu$, a smooth function $\map{f}{\R_{\geq 0} \times C}{\R^n}$, with $C \subseteq \R^n$ $f$-invariant, open and convex, and a constant $c >0$ ($c = 0)$ referred as \emph{contraction rate}, $f$ is $c$-strongly (weakly) infinitesimally contracting on $C$ if
\beq\label{cond:contraction_log_norm}
\mu\bigl(Df(t, x)\bigr) \leq -c,  \textup{ for all } x \in C  \textup{ and } t\in \R_{\geq0},
\eeq
where $Df(t,x) := \partial f(t,x)/\partial x$ is the Jacobian of $f$ with respect to $x$.
\ed
One of the benefits of contraction theory is that it enables the study of the convergence behavior of the flow map with a single condition.
Specifically, if $f$ is contracting, for any two trajectories $x(\cdot)$ and $y(\cdot)$ of~\eqref{eq:dynamical_system} it holds
$$\|\odeflowtx{t}{x(0)} - \phi_t(y(0))\| \leq \e^{-ct}\|x(0) -y(0)\|, \quad \textup{ for all } t \geq 0,$$
i.e., the distance between the two trajectories converges exponentially with rate $c$ if $f$ is $c$-strongly infinitesimally contracting, and never increases if $f$ is weakly infinitesimally contracting.

Strongly infinitesimally contracting systems enjoy many useful properties. Notably, initial conditions are exponentially forgotten~\cite{WL-JJES:98}, time-invariant dynamics admit a unique globally exponential stable equilibrium~\cite{WL-JJES:98} (see Figure~\ref{fig:contracting_systems}.a)), and enjoy highly robust behaviors~\cite{GR-MDB-EDS:10a, SX-GR-RHM:21}.
These properties do not generally extend to weakly infinitesimally contracting systems. Nevertheless, these systems still enjoy numerous useful properties, such as the so-called dichotomy property~\cite{SJ-PCV-FB:19q}. This property states that if a weakly infinitesimally contracting system on $C$ has no equilibrium point in $C$, then every trajectory starting in $C$ is unbounded (see Figure~\ref{fig:contracting_systems}.b)); otherwise, if the system has at least one equilibrium, then every trajectory starting in $C$ is bounded (see Figure~\ref{fig:contracting_systems}.c)).

\begin{figure}[!ht]
\centering
\includegraphics[scale = 0.36]{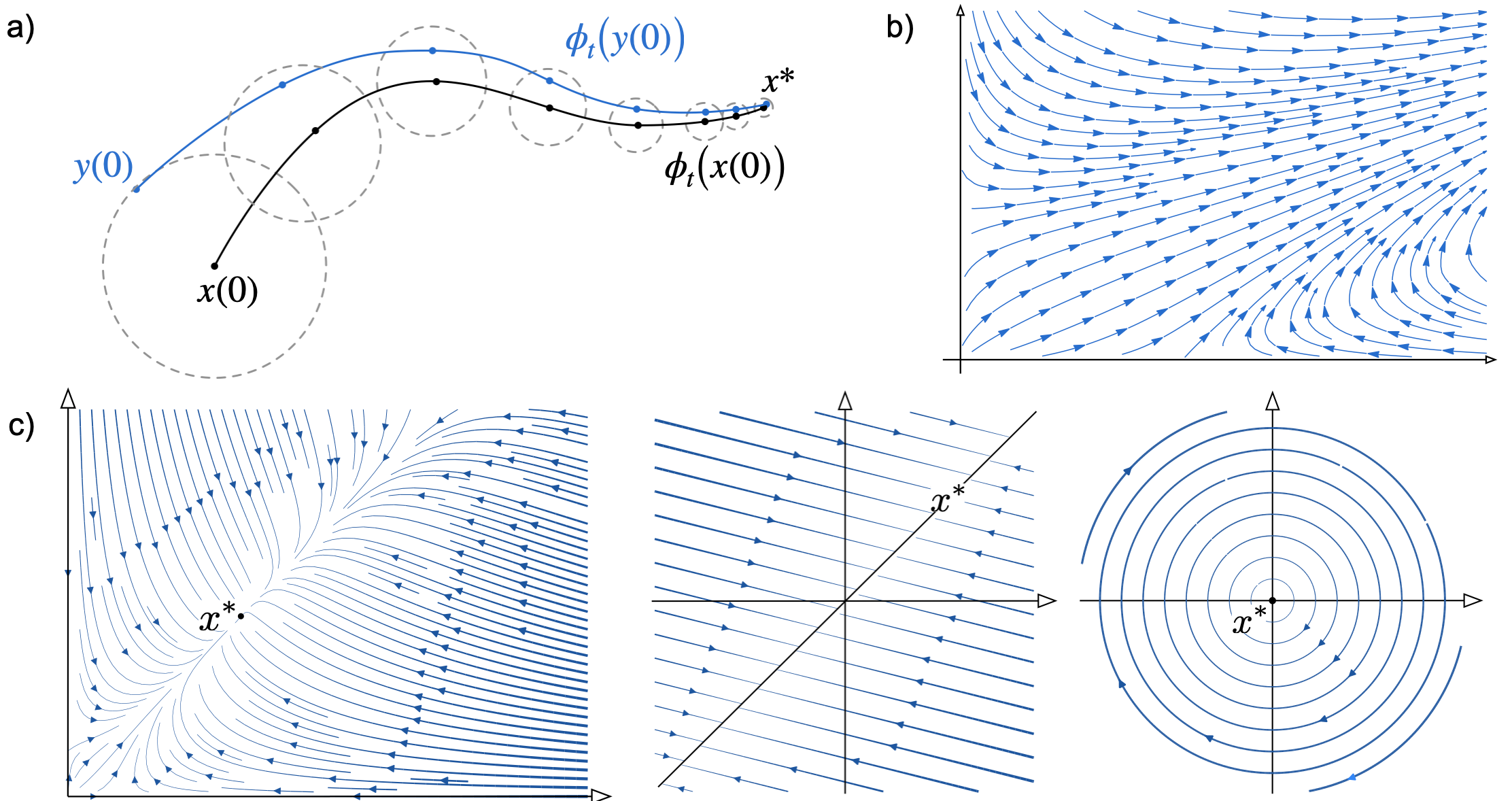}
\caption{Strongly infinitesimally contracting systems: a) the distance between any two trajectories converges exponentially to the unique equilibrium point $\xstar$. Illustration of the dichotomy property of weakly contracting systems: b) the system has no equilibrium and every trajectory is unbounded or c) there exists at least one equilibrium and every trajectory is bounded. Images reused with permission from~\cite{FB:23-CTDS}.}
\label{fig:contracting_systems}
\end{figure}
Of particular interest is the case of nonsmooth map $f$. In~\cite[Theorem 6]{AD-AVP-FB:22q} condition~\eqref{cond:contraction_log_norm} is generalized for locally Lipschitz function, for which the Jacobian $Df$ exists almost everywhere (a.e.) in $C$. Specifically, if $f$ is locally Lipschitz, then  $f$ is infinitesimally contracting on $C$ if condition~\eqref{cond:contraction_log_norm} holds for a.e. $x \in C$ and $t\in \R_{\geq0}$.

Finally, we recall the following result in~\cite[Corollary 1.(i)]{VC-AG-AD-GR-FB:23c} on the weakly infinitesimally contractivity of \FNN~\eqref{eq:firing_rate_nn} that will be useful for our analysis.
\begin{lem}[Weakly contractivity of the \FNN]\label{thm:contractivity_fnn}
Consider the \FNN~\eqref{eq:firing_rate_nn} with symmetric weight matrix $W$, and with activation function $\Psi$ being Lipschitz and slope restricted in $[0,1]$. If $\lmax = 1$, then the \FNN is weakly infinitesimally contracting with respect to some weighted Euclidean norm, say $\norm{\cdot}_{2,\qw}$.\footnote{The explicit expression for the matrix $\qw$ is given in Appendix~\ref{apx:weight_matrix_D}.}
\end{lem}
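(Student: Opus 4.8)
The plan is to verify the infinitesimal contraction condition~\eqref{cond:contraction_log_norm} with rate $c=0$ for a suitable weighted Euclidean log-norm. First I would compute the Jacobian of the right-hand side of~\eqref{eq:firing_rate_nn}: writing $f(\fr)=-\fr+\Psi(W\fr+\ufr)$ and applying the chain rule gives $Df(\fr)=-I_n+\Lambda(\fr)\,W$, where $\Lambda(\fr):=\diag{\Psi'(W\fr+\ufr)}$ is diagonal with entries in $[0,1]$ because $\Psi$ is slope restricted in $[0,1]$. Since $\Psi$ is merely Lipschitz, this identity holds for a.e.\ $\fr$, so by the generalization of~\eqref{cond:contraction_log_norm} to locally Lipschitz maps recalled in Section~\ref{sec:contraction_theory} it suffices to bound the log-norm at a.e.\ point. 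Thus the statement reduces to exhibiting a \emph{single} invertible $\qw$ such that $\wlognorm{2}{\qw}{-I_n+\Lambda W}\le 0$ for every diagonal $\Lambda$ with $0\preceq\Lambda\preceq I_n$.

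Next I would record the spectral mechanism that makes $\lmax=1$ the exact threshold. For any diagonal $\Lambda\succ 0$, the matrix $\Lambda W$ is similar, through $\Lambda^{1/2}$, to the symmetric matrix $\Lambda^{1/2}W\Lambda^{1/2}$; since $W=W^\top$ and $\lmax=1$ give $W\preceq I_n$, we have $\Lambda^{1/2}W\Lambda^{1/2}\preceq\Lambda\preceq I_n$, so every eigenvalue of $\Lambda W$ is at most $1$ and $\alpha(-I_n+\Lambda W)\le 0$. This computation also exhibits the ``natural'' but state-dependent weight $\Lambda^{-1/2}$, in which $\mu_{2,\Lambda^{-1/2}}(-I_n+\Lambda W)=\lambda_{\max}(\Lambda^{1/2}W\Lambda^{1/2})-1\le 0$. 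The difficulty of the lemma is precisely that this weight depends on $\fr$ and degenerates when $\Lambda$ has zero entries, so it cannot serve as the metric for one fixed norm; the spectral-abscissa bound alone does not furnish a uniform log-norm bound.

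The core of the proof is therefore the construction of one fixed $\qw$ that works uniformly in $\Lambda$, and this is the step I expect to be the main obstacle. I would rephrase the requirement as the Lyapunov-type matrix inequality $P(-I_n+\Lambda W)+(-I_n+\Lambda W)^\top P\preceq 0$ in the variable $P=\qw^\top\qw\succ 0$. Since this expression is affine in $\Lambda$ and the negative semidefinite cone is convex, the family of inequalities over the box $\{0\preceq\Lambda\preceq I_n\}$ holds if and only if it holds at the finitely many vertices $\Lambda\in\{0,1\}^n$. Evaluating at the vertex $\Lambda=I_n$ gives $P(-I_n+W)+(-I_n+W)P\preceq 0$; since $-I_n+W$ is symmetric with largest eigenvalue $0$, for any $v\in\ker(W-I_n)$ the quadratic form $v^\top\bigl(P(-I_n+W)+(-I_n+W)P\bigr)v$ vanishes identically, forcing $\bigl(P(-I_n+W)+(-I_n+W)P\bigr)v=0$, i.e.\ $(W-I_n)(Pv)=0$. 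Hence the top eigenspace $\ker(W-I_n)$ of $W$ must be $P$-invariant. This identifies the spectral decomposition of $W$ as the right ingredient: I would build $P$ from the eigenbasis of $W$, tuning the relative weights of the eigendirections to the corresponding eigenvalues, and then verify the remaining vertex inequalities simultaneously. Carrying out this construction and its uniform verification is the crux; the explicit choice of $\qw$ is the one recorded in Appendix~\ref{apx:weight_matrix_D}.

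Finally, with such a $\qw$ in hand, the chain $\wlognorm{2}{\qw}{Df(\fr)}=\wlognorm{2}{\qw}{-I_n+\Lambda(\fr)W}\le 0$ holds for a.e.\ $\fr$, which is exactly condition~\eqref{cond:contraction_log_norm} with $c=0$ in the norm $\norm{\cdot}_{2,\qw}$. Therefore the \FNN is weakly infinitesimally contracting with respect to $\norm{\cdot}_{2,\qw}$, which completes the argument.
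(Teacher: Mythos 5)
Your setup is sound: the a.e.\ Jacobian identity $Df(\fr)=-I_n+\Lambda(\fr)W$ with diagonal $0\preceq\Lambda\preceq I_n$, the reformulation of weak contraction as a single Lyapunov-type inequality $P(-I_n+\Lambda W)+(-I_n+W\Lambda)P\preceq 0$ with $P=\qw^\top\qw\succ0$ uniform in $\Lambda$, and the reduction to the $2^n$ vertices $\Lambda\in\{0,1\}^n$ (valid, since the left-hand side is affine in $\Lambda$ and the negative semidefinite cone is convex) are all correct. So is the necessary condition you extract at the vertex $\Lambda=I_n$, namely that $P$ must leave $\ker(W-I_n)$ invariant.

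However, there is a genuine gap, and you flag it yourself: the construction of a single $P$ satisfying all the vertex inequalities, and its verification, is precisely the content of the lemma, and your proposal stops exactly there (``Carrying out this construction and its uniform verification is the crux''). The necessary condition you derive is very far from sufficient --- it only constrains $P$ on one eigenspace, while the remaining $2^n-1$ vertex LMIs couple $P$ to $W$ in a nontrivial way, and nothing in your argument shows that a feasible $P$ exists; indeed, feasibility fails if $\lmax>1$, so any completion must use $\lmax=1$ again in an essential way at this stage. For comparison, the paper does not prove the lemma in-line at all: it invokes \cite[Corollary 1.(i)]{VC-AG-AD-GR-FB:23c} and merely records the resulting weight in Appendix~\ref{apx:weight_matrix_D}, namely $\qw=U\fsplit(\Lambda_W)U^\top$ with $W=U\Lambda_W U^\top$ and $\fsplit(z)=2\bigl(1+\sqrt{1-z}\bigr)$ --- a matrix function of $W$, hence commuting with $W$, consistent with (but much stronger than) your invariance condition. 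To close your argument you would need to take this specific $P$ (or construct an equivalent one) and actually verify $-2P+P\Lambda W+W\Lambda P\preceq 0$ for every diagonal $0\preceq\Lambda\preceq I_n$; that algebraic verification, which exploits the square root $\sqrt{I_n-W}$, is the step your proposal leaves undone.
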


\section{Main Results}
In this section, we present the main results of this paper. Specifically, we first introduce a family of continuous-time FNNs, i.e., the \fcn, to tackle the SR problem in~\eqref{eq:sparse_approx_C} and give a result relating the equilibria of the former to the optimal solutions of the latter. Then, we consider the SR problem with non-negativity constraints in~\eqref{eq:positive_E_lasso_unconstrained} and propose an FNN network, i.e., the \pfcn, to also tackle such problem.

\subsection{Firing-rate Neural Networks For Solving Sparse Reconstruction Problems}
\label{subsec:fnn_for_sr_problem}
The SR problems introduced in Section~\ref{sec:sparse_representation} naturally arise in the context of visual information processing. For example, as illustrated in Figure~\ref{fig:frlca}.a) for mammalians, the visual sensory input data $u \in \R^m$ is encoded by the receptive fields of simple cells in V1 using only a small fraction of active (sparse) neurons. Formally (see Figure \ref{fig:frlca}.b)), the input signal $u$ is reconstructed through a linear combination of an overcomplete matrix $\Phi \in \mathbb{R}^{m \times n}$ and a sparse vector $y \in \R^n$. The \fcn and \pfcn introduced in this paper to tackle the SR and positive SR problems are schematically illustrated in  Figure~\ref{fig:frlca}.c). Namely, each hidden node, or neuron in what follows, $x_i$ receives as stimulus the similarity score between the input signal $u\in \R^m$ and the dictionary element $\Phi_i\in \R^n$ and, collectively, all the hidden neurons give as output a sparse (non-negative) vector $y = x \in \R^n$.

\begin{figure}[!ht]
\centering
\includegraphics[width=0.8\linewidth]{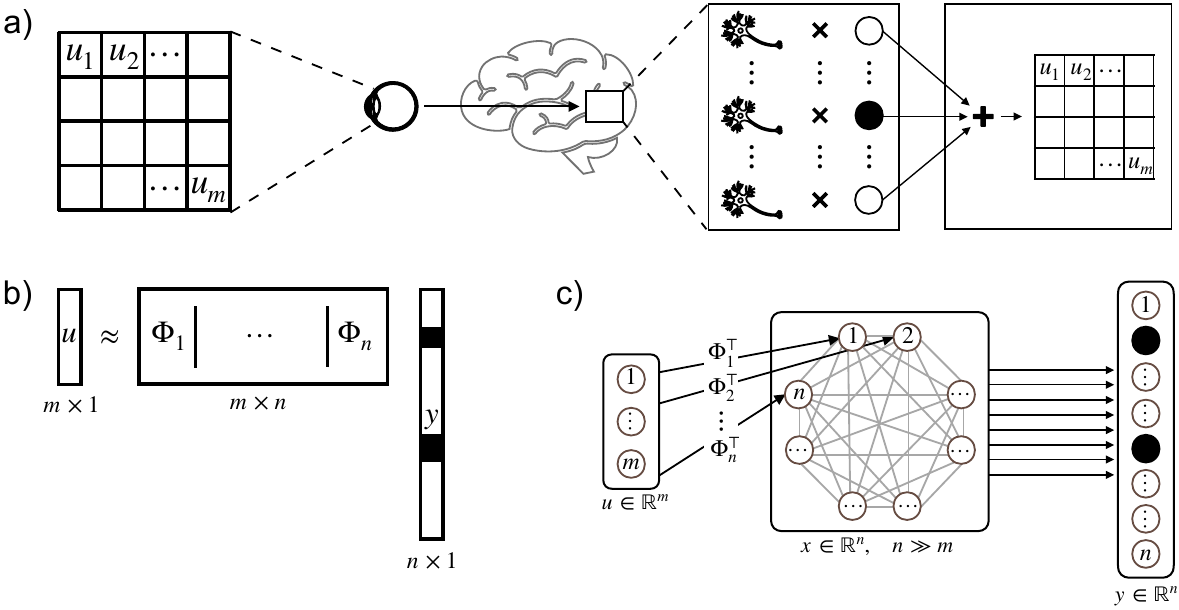}
\caption{The visual sensory input data $u \in \R^m$ is encoded by the receptive fields of simple cells in the mammalian visual cortex (V1) using only a small fraction of active (sparse) neurons. Formally, b) the input $u$ is reconstructed by a linear combination of an overcomplete ($n \gg m$) set of features $\Phi_i \in \R^n$ and sparse neurons $y \in \R^n$. c) Block scheme of the proposed (positive) firing-rate competitive network. The hidden node $x_i$ receives as stimulus the similarity score between the input signal $u\in \R^m$ and the dictionary element $\Phi_i\in \R^n$ and collectively all hidden neurons give as output a sparse (non-negative) vector $y = x \in \R^n$. Images reused with permission from~\cite{FB:23-CTDS}.}
\label{fig:frlca}
\end{figure}

\smallskip
To devise our results, we make use of the following standard assumption on the sparsity-inducing cost function $S$.
\smallskip
\begin{assumption}
\label{ass:actvation_function}
The function $S$ is convex, closed, proper\footnote{We refer to Appendix~\ref{apx:proximal_operator} for the definition of those notions.}, and separable across the indices, i.e., $S(y) = \sum_{i = 1}^n s(y_i)$, for all $y \in \R^n$, where $\map{s}{\R}{\R}$ is a convex, closed, and proper scalar function.
\end{assumption}
\smallskip

In order to transcribe the SR problem in~\eqref{eq:sparse_approx_C} into an interpretable continuous-time firing-rate neural network, we leverage the theory of \emph{proximal operators}.
The proximal operator of a convex function is a natural extension of the notion of projection operator onto a convex set. This concept has gained increasing significance in various fields, particularly in signal processing and optimization problems~\cite{PLC-JCP:11, AB:17}.
We refer to Appendix~\ref{apx:proximal_operator} for a self-contained primer on proximal operators, including the definition of the continuous-time proximal gradient dynamics~\eqref{apx:eq:prox:gradient}.

The SR problem~\eqref{eq:sparse_approx_C} is a special instance of the composite minimization problem~\eqref{apx:eq:composite_problem} in Appendix~\ref{apx:proximal_operator} with $f(x) := \frac{1}{2}\big\|u- \Phi y\big\|^2_2$ and $g(x) := \lambda S(y)$.
Therefore, to tackle problem~\eqref{eq:sparse_approx_C} we introduce the following special instance of proximal gradient dynamics, the \emph{firing-rate competitive network} (\fcn): 
\beq
\label{eq:frlca-x_dot_general}
\dot{x}(t) = -x(t) +\prox{\lambda S}{\bigl((I_n-\trasp{\Phi}\Phi)x(t) + \trasp{\Phi} u(t)\bigr)},
\eeq
with output $y(t) = x(t)$. 
This dynamics is schematically illustrated in Figure~\ref{fig:frlca}.c).
In~\eqref{eq:frlca-x_dot_general}, the term $\trasp{\Phi} u(t)$ is the input to the \fcn and it captures the similarity between the input signal and the dictionary elements, while the term $\bigl(I_n-\trasp{\Phi}\Phi\bigr)x(t)$ models the recurrent interactions between the neurons. These interactions implement competition between nodes to represent the stimulus.
Additionally, we note that in~\eqref{eq:frlca-x_dot_general} the particular form of the activation function is linked to the sparsity-inducing term in~\eqref{eq:sparse_approx_C}, $\lambda S$, via the proximal operator.
To be precise, the activation function is the proximal operator of $\lambda S$ computed at the point $x - \nabla\bigl(\frac{1}{2}\big\|u- \Phi x\big\|^2_2\bigr) = x - \Phi^\top \Phi x + \Phi^\top u$.

We now make explicit how the dynamics~\eqref{eq:frlca-x_dot_general} reads for the $\ell_1$ SR problem in~\eqref{eq:sparse_approx_l1} and the positive SR problem in~\eqref{eq:positive_E_lasso_unconstrained}.

For the lasso problem~\eqref{eq:sparse_approx_l1}, the sparsity-inducing cost function is $S(x) = \norm{x}_1$. This function is convex, separable and $s(x_i) = \abs{x_i}$, for all $x_i \in \R$. Moreover, it is well know (see, e.g.,~\cite{NP-SB:14}) that for any $x \in \R^n$, the proximal operator of $\lambda \norm{x}_1$ is $\prox{\lambda \norm{x}_1} = \softt{\lambda}{x}$, where $\map{\operatorname{soft}_{\lambda}}{\R^n}{\R^n}$ 
is the \emph{soft thresholding function} defined by $(\softt{\lambda}{x})_i= \softt{\lambda}{x_i}$, and the map $\map{\operatorname{soft}_{\lambda}}{\R}{\R}$ is defined by
\[
\softt{\lambda}{x_i} =
\begin{cases}
0 & \textup{ if } \abs{x_i} \leq \lambda, \\
x_i - \lambda\sign{x_i} & \textup{ if } \abs{x_i} > \lambda,
\end{cases}
\]
with $\map{\operatorname{sign}}{\R}{\{-1,0,1\}}$ being the \emph{sign function} defined by $\sign{x_i} := -1$ if $x_i<0$, $\sign{x_i} := 0$ if $x_i=0$, and $\sign{x_i} :=1$ if $x_i>0$.
Now and throughout the rest of the paper, we adopt a slight abuse of notation by using the same symbol to represent both the scalar and vector form of the activation function.
The corresponding \fcn~\eqref{eq:frlca-x_dot_general} for the lasso problem~\eqref{eq:sparse_approx_l1} is therefore:
\begin{align}
\label{eq:fr-sa_C-x_dot}
\dot x(t) &= - x(t) + \soft{\lambda}\bigr((I_n-\trasp{\Phi}\Phi)x(t) + \trasp{\Phi} u(t)\bigl).
\end{align}
Remarkably, the dynamics~\eqref{eq:fr-sa_C-x_dot} is the firing-rate version of the LCA designed for tackling the lasso problem~\eqref{eq:sparse_approx_l1}, which is a continuous-time Hopfield-like neural network of the form~\cite{CJR-DHJ-RGB-BAO:08}:
\begin{align}
\label{eq:lca-soft-xi}
\dot x(t) &= - x(t) + \bigr(I_n-\trasp{\Phi}\Phi\bigl) \soft{\lambda}(x(t)) + \trasp{\Phi} u(t),
\end{align}
with output $y(t) = \soft{\lambda}(x(t))$.

Next, we define the \fcn that solves the positive SR problem~\eqref{eq:positive_E_lasso_unconstrained}. For this purpose, we need to determine the proximal operator of ${\lambda S_1(x) = \sum_{i=1}^n \bigl( \lambda x_i + \iota_{\R_{\geq0}}(x_i)\bigr)}$. We have -- see Lemma~\ref{apx:lem:prox_of_relu} in Appendix~\ref{apx:proximal_operator} for the mathematical details -- that $\prox{\lambda S_1}(x) = \relu(x - \lambda \1_n)$, for all $x \in \R^n$, where $\map{\relu}{\R}{\R}$ is the \emph{(shifted) ReLU function} defined by
\[
\relu{(x_i - \lambda)} =
\begin{cases}
0 & \textup{ if } x_i \leq \lambda, \\
x_i - \lambda & \textup{ if } x_i > \lambda.
\end{cases}
\]

Thus, the \fcn~\eqref{eq:frlca-x_dot_general} that solves the positive SR problem~\eqref{eq:positive_E_lasso_unconstrained} is given by
\begin{equation}
\label{eq:pfr-lca_x_dot}
\begin{split}
\dot{x}(t) &= -x(t) +\relu{\bigl((I_n-\trasp{\Phi}\Phi)x(t) + \trasp{\Phi} u(t) - \lambda \1_n\bigr)}:= \fpfcn(x),
\end{split}
\end{equation}
with output $y(t) = x(t)$.
We call these dynamics \emph{positive firing-rate competitive network} (\pfcn). A key property of the \pfcn is that it is a \emph{positive system}; i.e., given a non-negative initial state, the state variables are always non-negative (we refer to Appendix~\ref{app:positive_system} for a rigorous proof of this statement). This is a desirable property that can be useful to effectively model both excitatory and inhibitory synaptic connections. In fact, in the \pfcn the nature of excitatory and inhibitory recurrent interactions, described by the term $-\sum_{j \neq i, j = 1}^n\trasp{\Phi_i} \Phi_j x_j$, only depends on the sign of the weights. Specifically, the recurrent interaction between two nodes, say them $i$ and $j$, is inhibitory if $-\sum_{j \neq i, j = 1}^n\trasp{\Phi_i} \Phi_j< 0$, and excitatory if $-\sum_{j \neq i, j = 1}^n\trasp{\Phi_i} \Phi_j > 0$.

\smallskip
Finally, for later use, we note that the Jacobian of $\fpfcn$ exists a.e. by Rademacher’s theorem, and now and throughout the rest of the paper we denote by $\nondiffset{f} \subset \R^n$ the measure zero set of points where the function $\fpfcn$ is not differentiable.

\subsection{Analysis of the Proposed Networks}
We now investigate the key properties of the models introduced above. Intuitively, the results presented in this section can be informally summarized via the following:
\begin{metatheorem}[Summary of the main results]
\label{metatheorem}
The trajectories of the \pfcn~\eqref{eq:pfr-lca_x_dot} are bounded. Additionally, if the dictionary $\Phi$ is RIP, then:
\begin{enumerate}
    \item the \pfcn converges to an equilibrium point that is also the optimal solution of the positive SR problem~\eqref{eq:positive_E_lasso_unconstrained};
    \item the convergence is linear-exponential, in the sense that the trajectory's distance from the equilibrium point initially decays at worst linearly, and then, after a transient, exponentially.
\end{enumerate}
\end{metatheorem}

\begin{figure}[h!]
\centering
\includegraphics[width=0.9\linewidth]{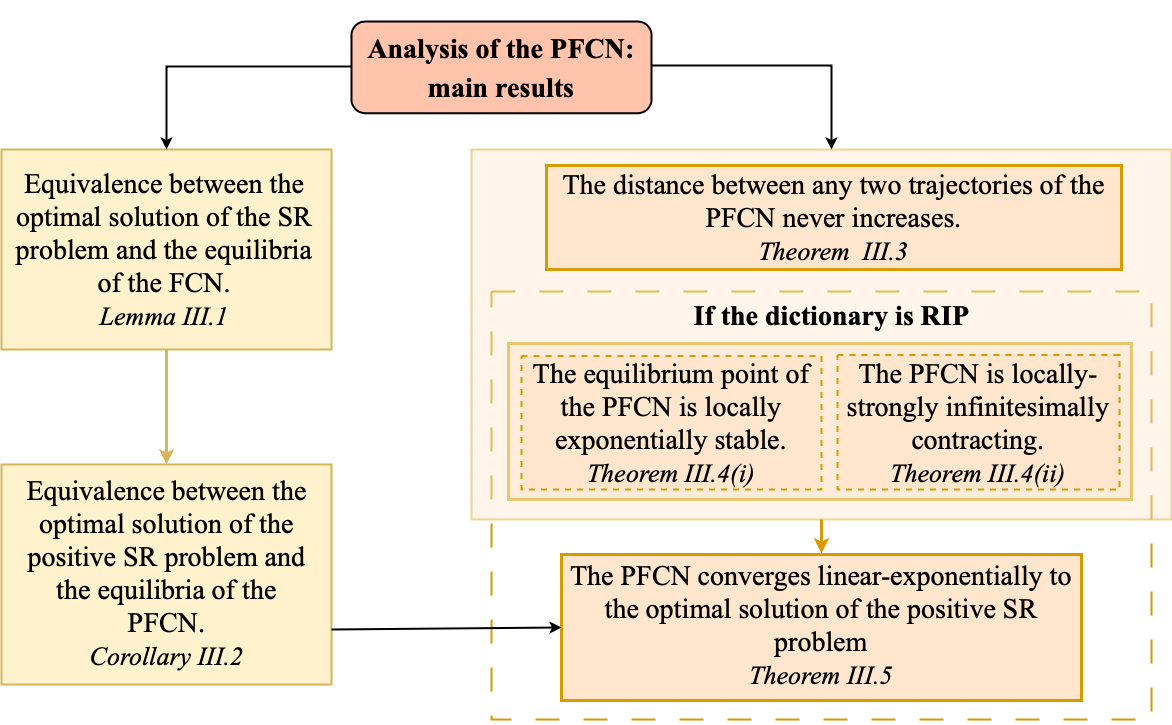}
\caption{Schematic diagram summarizing the main results and their assumptions. With Theorem~\ref{thm:GLin-ExpS_of_pfrlcn} we show that the \pfcn~\eqref{eq:pfr-lca_x_dot} exhibits linear-exponential convergence towards the optimal solution of the positive SR problem~\eqref{eq:positive_E_lasso_unconstrained}. The result follows from: (i) establishing a link between the optimal solution of~\eqref{eq:positive_E_lasso_unconstrained} and the equilibria of~\eqref{eq:pfr-lca_x_dot}; (ii) characterizing contractivity of~\eqref{eq:pfr-lca_x_dot}.}
\label{fig:metatheorem}
\end{figure}

The assumptions, the results, and their links towards building the claims in the informal statement are also summarized in Figure~\ref{fig:metatheorem}. Specifically, we first show that the equilibria of both the \fcn~\eqref{eq:frlca-x_dot_general} and \pfcn~\eqref{eq:pfr-lca_x_dot} are the optimal solutions of~\eqref{eq:sparse_approx_C} and~\eqref{eq:sparse_approx_l1}, respectively (Lemma~\ref{lem:opt_sol-eq_point} and Corollary~\ref{cor:pfcn_equilibria}). Then, we show that the distance between any two trajectories of the \pfcn never increases (Theorem~\ref{thm:weak-contractivity_pfrlcn}). Moreover, we show that if the dictionary $\Phi$ is RIP, then the equilibrium point for the \pfcn is not only locally exponentially stable but it is also strongly contracting in a neighborhood of the equilibrium (Theorem~\ref{thm:loc_exp_stab_loc_contractivity}). These results then lead to Theorem~\ref{thm:GLin-ExpS_of_pfrlcn}, where we show that the \pfcn~\eqref{eq:pfr-lca_x_dot} has a linear-exponential convergence behavior. That is, the distance between any trajectory of the \pfcn and its equilibrium is upper bounded, up to some \emph{linear-exponential crossing time}, say $\tld$, by a decreasing linear function. Then, for all $t>\tld$, the distance is upper bounded by a decreasing exponential function (see Figure~\ref{fig:lin_exp_schematic} for an illustration of this behavior).
To streamline the presentation, we provide explicit derivations for the \pfcn~\eqref{eq:pfr-lca_x_dot}. However, the analysis can be adapted for the \fcn~\eqref{eq:frlca-x_dot_general} (see Remark~\ref{rem:general_analysis} for the precise conditions).

\begin{SCfigure}[10][!ht]\centering
\includegraphics[width=.52\linewidth]{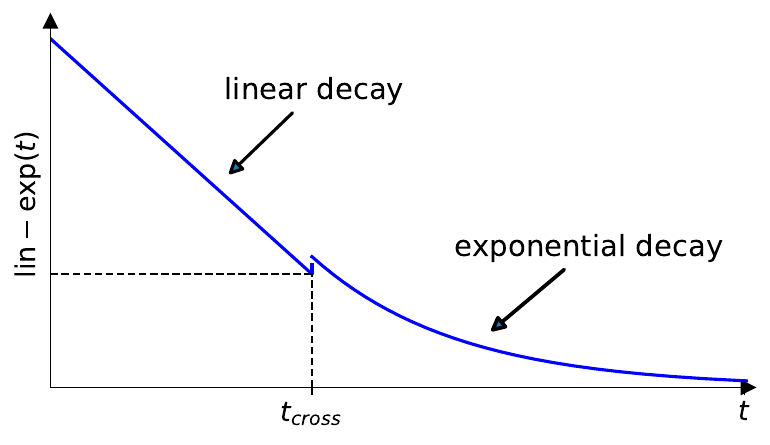}
\hspace{.03\linewidth}
\caption{Schematic representation of the linear-exponential convergence behavior exhibited by the \pfcn. The distance of the trajectory from the equilibrium point is upper bounded by a function that decreases linearly with time until $\tld$ and then exponentially for all $t>\tld$. While the solution of the \pfcn is continuous, a bounded jump in the upper bound we obtain might occur at time $\tld$.}
\label{fig:lin_exp_schematic}
\end{SCfigure}

\subsubsection{Relating the \fcn and \pfcn with SR Problems}
With Lemma~\ref{lem:opt_sol-eq_point} we show that a given vector is the optimal solution of~\eqref{eq:sparse_approx_C} if and only if this is also an equilibrium of the \fcn~\eqref{eq:frlca-x_dot_general}. Corollary~\ref{cor:pfcn_equilibria}, which follows from Lemma \ref{lem:opt_sol-eq_point}, relates the optimal solutions of~\eqref{eq:positive_E_lasso_unconstrained}  with the equilibria of the \pfcn~\eqref{eq:pfr-lca_x_dot}.

\begin{lem}[Linking the optimal solutions of the SR problem and the equilibria of the \fcn]
\label{lem:opt_sol-eq_point}
The vector $x^{*}\in\R^n$ is an optimal solution of the SR problem~\eqref{eq:sparse_approx_C} if and only if it is an equilibrium point of the \fcn~\eqref{eq:frlca-x_dot_general}.
\end{lem}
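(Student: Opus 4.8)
The plan is to translate the equilibrium equation of the \fcn into a subdifferential inclusion by means of the fixed-point characterization of the proximal operator, and then observe that this inclusion is precisely the first-order optimality condition for the convex program~\eqref{eq:sparse_approx_C}. Since both conditions are necessary and sufficient for their respective properties (the equilibrium condition by definition, the optimality condition because $E$ is convex), matching the two inclusions establishes the equivalence in both directions at once.

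First I would characterize equilibria: setting the right-hand side of~\eqref{eq:frlca-x_dot_general} to zero, a point $x^*$ is an equilibrium if and only if $x^* = \prox{\lambda S}{\bigl((I_n-\trasp{\Phi}\Phi)x^* + \trasp{\Phi} u\bigr)}$. Next I would invoke the standard fixed-point property of the proximal operator of a convex, closed, proper function (supplied by Assumption~\ref{ass:actvation_function}): for any $v$, one has $p = \prox{\lambda S}{(v)}$ if and only if $v - p \in \lambda\,\partial S(p)$. This follows by writing $\prox{\lambda S}{(v)}$ as the unique minimizer of the strongly convex map $z \mapsto \lambda S(z) + \tfrac12\norm{z-v}_2^2$ and reading off its first-order optimality condition $0 \in \lambda\,\partial S(p) + (p-v)$.

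Applying this with $v = (I_n-\trasp{\Phi}\Phi)x^* + \trasp{\Phi} u$ and $p = x^*$, and computing $v - x^* = -\trasp{\Phi}\Phi x^* + \trasp{\Phi} u = \trasp{\Phi}(u - \Phi x^*)$, the equilibrium condition becomes
\beq
\trasp{\Phi}(u - \Phi x^*) \in \lambda\,\partial S(x^*).
\eeq
On the optimization side, $E = f + \lambda S$ with $f(y) = \tfrac12\norm{u-\Phi y}_2^2$ smooth, $\nabla f(y) = \trasp{\Phi}(\Phi y - u)$, and $\lambda S$ convex. Because $f$ is differentiable everywhere, the subdifferential sum rule yields $\partial E(x^*) = \nabla f(x^*) + \lambda\,\partial S(x^*)$, and since $E$ is convex, $x^*$ is a global minimizer of~\eqref{eq:sparse_approx_C} if and only if $0 \in \partial E(x^*)$, i.e.\ $\trasp{\Phi}(u - \Phi x^*) \in \lambda\,\partial S(x^*)$. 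This inclusion is identical to the equilibrium condition, which proves both implications simultaneously.

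The technical content is light once the regularity hypotheses are in place; the only points requiring care are verifying that Assumption~\ref{ass:actvation_function} delivers exactly the convexity, closedness, and properness needed to (i) make the proximal operator single-valued with the stated fixed-point characterization and (ii) validate the sum rule $\partial(f+\lambda S) = \nabla f + \lambda\,\partial S$. I do not anticipate a genuine obstacle: since $f$ is globally smooth, no constraint-qualification subtlety arises in the sum rule, and the scaling identity $\partial(\lambda S) = \lambda\,\partial S$ holds for $\lambda \geq 0$ (with the $\lambda = 0$ case reducing the \fcn to the gradient flow of $f$ and both inclusions to $\trasp{\Phi}(u-\Phi x^*) = 0$).
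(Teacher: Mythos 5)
Your proposal is correct and follows essentially the same route as the paper: both reduce optimality to the subdifferential inclusion $\0_n \in \partial E(\xstar) = \trasp{\Phi}\Phi\xstar - \trasp{\Phi}u + \lambda\,\partial S(\xstar)$ and identify this with the equilibrium condition of the \fcn~\eqref{eq:frlca-x_dot_general}. The only difference is that the paper delegates the key step -- equivalence between zeros of the monotone inclusion and fixed points of the proximal map -- to a citation (Proposition~4 of~\cite{AD-VC-AG-GR-FB:23f}), whereas you prove that step inline via the standard fixed-point characterization $p = \prox{\lambda S}(v) \iff v - p \in \lambda\,\partial S(p)$, which is exactly the content of the cited result.
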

\begin{proof}
The necessary and sufficient condition for $\xstar \in\R^n$ to be a solution of 
problem~\eqref{eq:sparse_approx_C} is
\[
\0_n \in \partial E(\xstar) = \Phi^\top \Phi \xstar - \Phi^\top u + \lambda \partial S(\xstar) := \OG(\xstar) + \lambda \partial S(\xstar),
\]
where we have introduced the function $\OG : \R^n \to \R^n$ defined as $\OG(x) = \Phi^\top \Phi x - \Phi^\top u$.
Note that $\OG$ is a linear function of $x$, therefore it is Lipschitz, and $\OG(x) = \nabla\bigl(\frac{1}{2}\big\|u- \Phi x\big\|^2_2\bigr)$. That is, $\OG$ is the gradient (w.r.t. $x$) of a convex function, and thus it is monotone.
Moreover, by Assumption~\ref{ass:actvation_function}, the function $S$ is convex, closed, and proper, and therefore, so it is $\lambda S$. Then, by applying the result in~\cite[Proposition 4]{AD-VC-AG-GR-FB:23f} (picking $\gamma= 1$, $\OF = \OG$, and $g = \lambda S$ in such proposition), we have that $\0_n \in \bigl(\OG + \partial \lambda S\bigr)(\xstar)$ if and only if $\xstar$ is an equilibrium of $\dot{x} = - x + \prox{\lambda S}\bigl(x - \OG(x)\bigr) = -x + \prox{\lambda S}\bigl((I_n - \Phi^\top \Phi) x + \Phi^\top u\bigr)$. This concludes the proof.
\end{proof}

We can then state the following:

\begin{cor}[Linking the optimal solutions of the positive SR problem and the equilibria of the \pfcn]\label{cor:pfcn_equilibria}
\label{cor:opt_sol-eq_point_pfcn}
The vector $x^{*}\in\R^n$ is an optimal solution of the positive SR problem~\eqref{eq:positive_E_lasso_unconstrained} if and only if it is an equilibrium point of the \pfcn~\eqref{eq:pfr-lca_x_dot}.
\end{cor}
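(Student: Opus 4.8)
The plan is to recognize the positive SR problem \eqref{eq:positive_E_lasso_unconstrained} as a particular instance of the general SR problem \eqref{eq:sparse_approx_C} and then to invoke Lemma~\ref{lem:opt_sol-eq_point} verbatim. Concretely, I would set $S := S_1$, where $S_1(y) = \|y\|_1 + \frac{1}{\lambda}\iota_{\R^n_{\geq0}}(y)$, so that $\lambda S_1(y) = \lambda\|y\|_1 + \iota_{\R^n_{\geq0}}(y)$ reproduces exactly the nonsmooth penalty appearing in \eqref{eq:positive_E_lasso_unconstrained}. With this identification the two optimization problems coincide, and an optimal solution of \eqref{eq:positive_E_lasso_unconstrained} is, by definition, an optimal solution of \eqref{eq:sparse_approx_C} with $S = S_1$.

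The first thing to check is that the choice $S = S_1$ is admissible, i.e., that $S_1$ satisfies Assumption~\ref{ass:actvation_function}. This is a routine convex-analytic verification: the map $y \mapsto \|y\|_1$ is convex, continuous (hence closed) and finite everywhere (hence proper), while $\iota_{\R^n_{\geq0}}$ is the indicator of a nonempty closed convex cone and is therefore convex, closed, and proper; their sum inherits all three properties and has nonempty effective domain $\R^n_{\geq0}$. Separability is immediate, since $S_1(y) = \sum_{i=1}^n s_1(y_i)$ with the scalar summand $s_1(y_i) = y_i + \frac{1}{\lambda}\iota_{\R_{\geq0}}(y_i)$ (using $|y_i| = y_i$ on the effective domain $\R_{\geq0}$), and each $s_1$ is itself convex, closed, and proper.

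Having verified admissibility, I would apply Lemma~\ref{lem:opt_sol-eq_point} with $S = S_1$: it yields that $\xstar$ is an optimal solution of \eqref{eq:positive_E_lasso_unconstrained} if and only if $\xstar$ is an equilibrium of the \fcn \eqref{eq:frlca-x_dot_general} instantiated at $S = S_1$, namely $\dot x = -x + \prox{\lambda S_1}\bigl((I_n - \trasp{\Phi}\Phi)x + \trasp{\Phi} u\bigr)$. The final step is to identify this dynamics with the \pfcn \eqref{eq:pfr-lca_x_dot}, which reduces to the computation of the proximal operator of $\lambda S_1$. Here I would invoke Lemma~\ref{apx:lem:prox_of_relu}, which gives $\prox{\lambda S_1}(x) = \relu(x - \lambda\1_n)$; substituting this closed form into the \fcn recovers exactly \eqref{eq:pfr-lca_x_dot}, completing the equivalence.

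I expect essentially no obstacle: once the positive SR problem is cast as \eqref{eq:sparse_approx_C} with the separable, closed, proper convex cost $S_1$, the corollary is an immediate specialization of Lemma~\ref{lem:opt_sol-eq_point}. The only substantive content is delegated to two auxiliary facts established elsewhere --- the admissibility checks above and the proximal-operator identity of Lemma~\ref{apx:lem:prox_of_relu} --- and neither requires more than standard convex analysis. The mild subtlety worth flagging is simply the bookkeeping of the factor $\lambda$, i.e., ensuring that the regularizer written as $\lambda S_1$ (rather than $S_1$) is what matches both the objective in \eqref{eq:positive_E_lasso_unconstrained} and the proximal term in the dynamics.
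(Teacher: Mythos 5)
Your proposal is correct and is essentially the paper's own route: the paper omits the proof of this corollary, stating only that it follows the arguments of Lemma~\ref{lem:opt_sol-eq_point}, and your argument is precisely that specialization, with $S = S_1$ checked against Assumption~\ref{ass:actvation_function} and the identification $\prox{\lambda S_1}(x) = \relu(x - \lambda\1_n)$ supplied by Lemma~\ref{apx:lem:prox_of_relu}. The one cosmetic mismatch (shared by the paper itself) is that $S_1$ is extended-real-valued while Assumption~\ref{ass:actvation_function} writes $\map{S}{\R^n}{\R}$, but this is immaterial since closedness and properness are exactly the extended-real-valued hypotheses the underlying proximal machinery requires.
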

\begin{proof} The proof, which follows the arguments used to prove Lemma~\ref{lem:opt_sol-eq_point}, is omitted for brevity.
\end{proof}
\smallskip

\subsubsection{Convergence Analysis}
We now present our convergence analysis for the \pfcn. In doing so, we introduce here the main convergence results and we refer to the Methods section and to the appendices for technical instrumental results. We start with the following:
\smallskip
\bd[Active and inactive neuron]\label{def:active_neurons}
Given a neural state $\xstar\in\R^n$, an input $u \in \R^m$, and a parameter $\lambda > 0$, the $i$-th neuron is \emph{active} if $\relu\bigl(((I_n - \trasp{\Phi}\Phi)\xstar + \trasp{\Phi}u - \lambda \1_n)_i\bigr) \neq 0$, \emph{inactive} if $\relu\bigl(((I_n - \trasp{\Phi}\Phi)\xstar + \trasp{\Phi}u - \lambda \1_n)_i\bigr) = 0$.
\ed

\smallskip
\begin{rem}
The definition of \emph{active} and \emph{inactive} neuron/node in our model aligns with the definitions provided in~\cite{AB-JR-CJR:12} for the LCA.
Specifically, as in~\cite{AB-JR-CJR:12}, for an equilibrium point $\xstar \in \mathbb{R}^n$, the activation function in our model is also composed of two operational regions. Namely: (i) one region characterized by having $(I_n - \trasp{\Phi}\Phi)\xstar + \trasp{\Phi}u - \lambda \1_n$ below zero, in which case the output $y$ is zero, as the system is at the equilibrium $\xstar = \relu\bigl((I_n - \trasp{\Phi}\Phi)\xstar + \trasp{\Phi}u - \lambda \1_n\bigr)$. (ii) one region characterized by having $(I_n - \trasp{\Phi}\Phi)\xstar + \trasp{\Phi}u - \lambda \1_n$ above zero, in which case $y$ is strictly increasing with the state $x$.
\end{rem}

\smallskip
\begin{remark}
\label{rem:general_analysis}
We present the convergence analysis for the \pfcn. However, our results can be extended to any \fcn~\eqref{eq:frlca-x_dot_general}, whose proximal operator is Lipschitz and slope restricted in $[0,1]$. For the \fcn, the $\relu$ in Definition \ref{def:active_neurons} is replaced by $\prox{\lambda S}{ }$. For example, our convergence analysis can be extended to the firing-rate version of the LCA tackling problem~\eqref{eq:sparse_approx_l1}, i.e. the dynamics~\eqref{eq:fr-sa_C-x_dot}, being $\soft{\lambda}$ Lipschitz and slope restricted in $[0,1]$.

\end{remark} 

\smallskip
We now show that the distance between any two trajectories of the \pfcn never increases (see Figure \ref{fig:metatheorem}). We do so by proving that the \pfcn is weakly infinitesimally contracting.

\bt[Global weak contractivity of the \pfcn]\label{thm:weak-contractivity_pfrlcn}
The \pfcn~\eqref{eq:pfr-lca_x_dot} is weakly infinitesimally contracting on $\R^n$ with respect to the weighted norm $\norm{\cdot}_{2,\qw}$.\footnote{the explicit expression of $\qw \in \R^{n \times n}$ is given in~\eqref{eq:weight_matrix_wc} of Appendix~\ref{apx:weight_matrix_D}}
\et
\begin{proof}
First, we note that the activation function $\relu$ is Lipschitz with constant $1$ and slope restricted in $[0,1]$. Indeed the partial derivative of $\relu$, ${\partial \bigl(\relu(z)\bigr)}/{\partial z}$, is $0$ if $z < 0$, and $1$ if $z > 0$.
Moreover, $\alpha\bigl(W\bigr) = \alpha\bigl(I_n-\Phi^\top\Phi\bigr) = 1$, being $\Phi^\top\Phi \succeq 0$.
The result then follows by applying Lemma~\ref{thm:contractivity_fnn}.
\end{proof}

\smallskip

Essentially, with the above results we established that the trajectories of the \fcn are bounded. Next, we further characterize the stability of the equilibria of the \pfcn when the dictionary is RIP. We prove that the equilibrium is not only locally exponentially stable but also locally-strongly contracting in a suitably defined norm (see Figure \ref{fig:metatheorem}). 

\newcommand{\strongweight}{\qs_{\eps}}
\bt[Local exponential stability and local strong contractivity of the \pfcn]
\label{thm:loc_exp_stab_loc_contractivity}
Let $\xstar\in \R_{\geq 0}^n \setminus \nondiffset{f}$ be an equilibrium point of the \pfcn~\eqref{eq:pfr-lca_x_dot} having $\na$ active neurons. If the dictionary $\Phi$ is \textup{RIP} of order $\na$ and parameter $\delta \in [0,1[$, then
\begin{enumerate}
\item $\xstar$ is locally exponentially stable; \label{item:locally-exp-stability_pfrlcn}
\item the \pfcn~\eqref{eq:pfr-lca_x_dot} is strongly infinitesimally contracting with rate $\ce > 0$ with respect to the norm $\norm{\cdot}_{2,\strongweight}$ in a neighborhood of $\xstar$.
\label{item:strongly_contracting_pfrlcn}
\end{enumerate}
\et
\begin{proof}
To prove item~\ref{item:locally-exp-stability_pfrlcn}, we show that $D\fpfcn(\xstar)$ is a Hurwitz matrix, i.e., $\alpha(D\fpfcn(\xstar))<0$.
We start noticing that
\begin{align}
    D\fpfcn(\xstar) = -I_n + \diag{d}\bigl(I_n-\trasp{\Phi}\Phi\bigr),
\end{align}
where $[d] = \left[ \partial \relu\bigl((I_n - \trasp{\Phi}\Phi)\xstar + \trasp{\Phi}u - \lambda \1_n\bigr)\right]$ is a diagonal matrix having diagonal entries equal to $0$ or $1$.
We let $\subscr{n}{a}$ and $\subscr{n}{ia}$ be the number of active and inactive neurons of $\xstar$, respectively, and rearrange the ordering of the elements in $\xstar$ such that $\xstar = [\subscr{\xstar}{a}, \quad {\subscr{\xstar}{ia}}]^\top$, where, ${\subscr{\xstar}{a}} \in \R^{\na}$ and ${\subscr{\xstar}{ia}} \in \R^{\nia}$, so that
\begin{align}
    \diag{d} = 
    \begin{bmatrix}
        I_{\na} & 0 \\ 0 & 0
    \end{bmatrix}.
\end{align}
Further, we also decompose $\trasp{\Phi}\Phi$ into
\begin{align}
\displaystyle \trasp{\Phi}\Phi = 
    \begin{bmatrix}
        \subscr{\trasp{\Phi}}{a}\subscr{\Phi}{a} & \subscr{\trasp{\Phi}}{a}\subscr{\Phi}{ia} \\ \subscr{\trasp{\Phi}}{ia}\subscr{\Phi}{a} & \subscr{\trasp{\Phi}}{ia}\subscr{\Phi}{ia}
    \end{bmatrix},
\end{align}
where $\subscr{\trasp{\Phi}}{a}\subscr{\Phi}{a} \in \R^{\na\times \na}$, $\subscr{\trasp{\Phi}}{a}\subscr{\Phi}{ia}\in \R^{\na\times \nia}$, $\subscr{\trasp{\Phi}}{ia}\subscr{\Phi}{a} \in \R^{\nia\times \na}$, $\subscr{\trasp{\Phi}}{ia}\subscr{\Phi}{ia}\in \R^{\nia\times \nia}$.\\ 
The fact that $\Phi$ is RIP of order $\na$ implies that
\begin{align*}
    \norm{\subscr{\trasp{\Phi}}{a} \subscr{x}{a}}_2^2 &=\trasp{x_a} \subscr{\trasp{\Phi}}{a}\subscr{\Phi}{a} x_a = 
    \trasp{\begin{bmatrix}
        x_a \\ \0_{\nia}
    \end{bmatrix}}
    \begin{bmatrix}
        \subscr{\trasp{\Phi}}{a}\subscr{\Phi}{a} & \subscr{\trasp{\Phi}}{a}\subscr{\Phi}{ia} \\ \subscr{\trasp{\Phi}}{ia}\subscr{\Phi}{a} & \subscr{\trasp{\Phi}}{ia}\subscr{\Phi}{ia}
    \end{bmatrix}
    \begin{bmatrix}
        \subscr{x}{a} \\ \0_{\nia}
    \end{bmatrix}
    \geq (1 -  \delta) \norm{\subscr{x}{a}}_2^2 > 0.
\end{align*}
Therefore, $\subscr{\trasp{\Phi}}{a}\subscr{\Phi}{a}$ is positive definite, and its smallest eigenvalue is bounded below by $1-\delta$. Moreover, $D\fpfcn(\xstar)$ can be written as
\begin{align}
\label{eq:DflcaPFR(xstar)}
D\fpfcn(\xstar) &=
\begin{bmatrix}
- \subscr{\trasp{\Phi}}{a}\subscr{\Phi}{a} & -\subscr{\trasp{\Phi}}{a}\subscr{\Phi}{ia} \\
0 & -I_{\nia}
\end{bmatrix},
\end{align}
that is a block upper triangular matrix with Hurwitz diagonal block matrices, and $\alpha{\bigl(D\fpfcn(\xstar)\bigr)} = \delta - 1 < 0$. Thus $D\fpfcn(\xstar)$ is Hurwitz. This concludes the proof.

Next, to prove~\ref{item:strongly_contracting_pfrlcn} we note that $\subscr{\lambda}{min}\bigl(\subscr{\trasp{\Phi}}{a}\subscr{\Phi}{a}\bigr) \geq 1-\delta$. By applying Corollary~\ref{cor:mu2_bound-for-Df} in Appendix~\ref{axp:l2_log_norm_of_upper_triangular_block_matrices} to $D\fpfcn(\xstar)$, we have $$\wlognorm{2}{\strongweight}{D\fpfcn(\xstar)} \leq - \ce < 0,$$ with the explicit expression of $\ce$ and $\qs_\eps$ given in~\eqref{eq:c_exp}.
Let $\mathcal{K}$ be the region of differentiable points in a neighborhood of $\xstar$. Then, by the continuity property of the log-norm, there exists a neighborhood of $\xstar$,
\beq
\label{eq:ball:strong_contractivy}
\ball{\strongweight}{\radiusstrong} := \setdef{z\in\R^n}{\norm{z - \xstar}_{2,\strongweight}\leq \radiusstrong}, \text{ with } \radiusstrong := \sup \setdef{z > 0}{\ball{\strongweight}{z} \subset \mathcal{K}},
\eeq
where $D\fpfcn(x)$ exists and $\wlognorm{2}{\qs_\eps}{D\fpfcn(x)} \leq - \ce$, for all $x \in \ball{\strongweight}{{\radiusstrong}}$.
This concludes the proof.
\end{proof}

\smallskip

\begin{rem}
To improve readability, in the statement of  Theorem~\ref{thm:loc_exp_stab_loc_contractivity} we do not provide the explicit expression for $\ce$ and $\qs_\eps$. These are instead given in~\eqref{eq:c_exp} of Appendix~\ref{axp:l2_log_norm_of_upper_triangular_block_matrices}. For the same reason, we do not report in the statement of  Theorem~\ref{thm:loc_exp_stab_loc_contractivity} the neighborhood in which the \pfcn is strongly infinitesimally contracting. However, as apparent from the proof, the neighborhood is $\ball{\strongweight}{\radiusstrong}$, which is defined in~\eqref{eq:ball:strong_contractivy}.
\end{rem}
\smallskip

\smallskip
With the next result 
we prove that the \pfcn converges linear-exponentially to $\xstar$ (see Figure~\ref{fig:metatheorem}). The proof is given in the Methods section, where we prove a more general result (see Corollary~\ref{thm:fin_decay_fin_time_GWC_and_LSC_dynamics}, in Section~\ref{sec:proof_theorem}).
We summarize the key symbols used in the next theorem in Table~\ref{table:symbols}.
\begin{table}
\centering
\resizebox{\textwidth}{!}{%
\begin{tabular}{|c|l|l|}
\hline
\multicolumn{1}{|c|}{Symbol} & \multicolumn{1}{c|}{Meaning} & \multicolumn{1}{c|}{Ref.}\\
\hline
$\qw$ & Weight matrix w.r.t. the \pfcn is globally-weakly contracting & Equation~\eqref{eq:weight_matrix_wc}\\
\hline
$\norm{\cdot}_{2,\qw}$ & Euclidean weighted norm w.r.t. the \pfcn is globally-weakly contracting & Theorem~\ref{thm:weak-contractivity_pfrlcn} \\
\hline
$\strongweight$ & Weight matrix w.r.t. the \pfcn is locally-strongly contracting & Equation~\eqref{eq:c_exp}\\
\hline
$\norm{\cdot}_{2,\strongweight}$ & Euclidean weighted norm w.r.t. the \pfcn is locally-strongly contracting & Theorem~\ref{thm:loc_exp_stab_loc_contractivity} \\
\hline
$\prodeqnorm_{\strongweight,\qw}$ & Equivalence ratio between $\norm{\cdot}_{2,\strongweight}$ and $\norm{\cdot}_{2,\qw}$ & Definition~\ref{def:equivalence_ratio} \\
\hline
$\radiusstrong$ & Radius of the ball where the system is strongly infinitesimally contracting & Equation~\eqref{eq:ball:strong_contractivy} \\
\hline
$\ball{\strongweight}{{\radiusstrong}}$ & Ball of radius $\radiusstrong$ centered at $\xstar$ computed w.r.t. $\norm{\cdot}_{2,\strongweight}$ 
& Equation~\eqref{eq:ball:strong_contractivy} \\ 
\hline
$\radius$ & Radius of the largest ball $\ball{\qw}{\radius}$ contained in $\ball{\strongweight}{{\radiusstrong}}$ & Theorem~\ref{thm:GLin-ExpS_of_pfrlcn}\\
\hline
$\ball{\qw}{\radius}$ & Ball of radius $\radius$ centered at $\xstar$ computed w.r.t. $\norm{\cdot}_{2,\qw}$ & Theorem~\ref{thm:GLin-ExpS_of_pfrlcn}\\ 
\hline
$\ce$ & Exponential decay rate & Equation~\eqref{eq:c_exp}\\
\hline
$\cl$ & Average linear decay rate & Equation~\eqref{eq:c_lin_pfcn}\\
\hline
$\tld$ & Linear-exponential crossing time & Equation~\eqref{eq:t_cross_pfcn}\\
\hline
$\rho$ & Contraction factor, $0 < \rho < 1$ & Theorem~\ref{thm:GLin-ExpS_of_pfrlcn}\\
\hline
\end{tabular}
}
\caption{Symbols used in the linear-exponential bound~\eqref{eq:lin_exp_bound}.}
\label{table:symbols}
\end{table}

\smallskip
\bt[Linear-exponential stability of the \pfcn]
\label{thm:GLin-ExpS_of_pfrlcn}
Consider the \pfcn~\eqref{eq:pfr-lca_x_dot}
under the same assumptions and notations of Theorems~\ref{thm:weak-contractivity_pfrlcn} and~\ref{thm:loc_exp_stab_loc_contractivity}. 
Let $\ball{\strongweight}{{\radiusstrong}}$ be the ball around $\xstar$ where the system is strongly infinitesimally contracting. Then, for each trajectory $x(t)$ starting from $x(0) \notin \ball{\strongweight}{{\radiusstrong}}$ and for any $0 < \rho < 1$, the distance $\norm{x(t) - \xstar}_{2,\qw}$ decreases \emph{linear-exponentially}, in the sense that:
\begin{align}
\norm{x(t) - \xstar}_{2, \qw} 
&\leq 
\begin{cases}
\norm{x(0) - \xstar}_{2, \qw} +(1 - \rho)\radius - \cl t ,
\qquad & \text{if } t \leq \tld,\\
\, \prodeqnorm_{\strongweight,\qw} \, \radius \, \e^{ - \ce (t - \tld)}
\quad & \text{if } t>\tld,
\end{cases}
\label{eq:lin_exp_bound}
\end{align}
where $\radius > 0$ is the radius of the largest ball $\ball{\qw}{\radius}$ centered at $\xstar$ such that $\ball{\qw}{\radius} \subset \ball{\strongweight}{{\radiusstrong}}$ and where
\begin{align}
\label{eq:c_lin_pfcn}
\cl &= \frac{\ce(1- \rho)\radius}{\ln(\prodeqnorm_{\strongweight,\qw} \rho^{-1})}, \\
\tld &= \Bigceil{\frac{\norm{x(0)-\xstar}_{2, \qw} - \radius}{(1-\rho)\radius}} \frac{\ln(\prodeqnorm_{\strongweight,\qw} \rho^{-1})}{\ce},
\label{eq:t_cross_pfcn}
\end{align}
are the \emph{average linear decay rate} and the \emph{linear-exponential crossing time}, respectively.
\et
\smallskip
In what follows, we simply term $\rho$ as \emph{contraction factor}. We also note that the contraction factor can be chosen optimally to maximize the average linear decay rate $\cl$. We refer to Lemma~\ref{lem:optimal_contraction_factor} for the mathematical details and the exact statement.
\section{Simulations}
We now illustrate the effectiveness of the \pfcn in solving the positive SR problem~\eqref{eq:positive_E_lasso_unconstraines} via a numerical example\footnote{The code to replicate all the simulations in this section is available at the github \url{https://tinyurl.com/PFCN-for-Sparse-Reconstruction}.} that is built upon the one in~\cite{AB-JR-CJR:12}. To this aim, we consider a $n = 512$ dimensional sparse signal $y_0 \in \R_{\geq 0}^n$, with $\na = 5$ randomly selected non-zero entries. The amplitude of these non-zero entries is obtained by drawing from a normal Gaussian distribution and then taking the absolute values. As in~\cite{AB-JR-CJR:12} the dictionary $\Phi \in \R^{m \times n}$ is built as a union of the canonical basis and a sinusoidal basis (each basis is normalized so that the dictionary columns have unit norms). Also, we set: (i) the measurements $u \in \R^m$, with $m = 256$, to be $u = \Phi y_0 + \eta$, where $\eta$ is a Gaussian random noise with standard deviation $\sigma = 0.0062$; (ii)  $\lambda = 0.025$.

Given this set-up, we simulated both the \pfcn~\eqref{eq:pfr-lca_x_dot} and, for comparison, the LCA~\eqref{eq:lca-soft-xi}. Simulations were performed  with Python using the ODE solver \emph{solve\_ivp}. In all the numerical experiments, the simulation time was $t \in [0, 15]$ and initial conditions were set to $0$, except for $20$ randomly selected neurons (initial conditions were kept constant across the simulations). The time evolution of the state variables for both the \pfcn and LCA is shown in Figure~\ref{fig:sim-trajectories_pfcn_soft}. Both panels illustrate that both the \pfcn and the LCA converge to an equilibrium that is close to $y_0$ (although it can not be exactly $y_0$ because of the measurement noise). Also, the figure clearly shows, in accordance with Lemma~\ref{lem:positive_frcn}, that the trajectories of the \pfcn are always non-negative. Instead, the trajectories of the LCA nodes exhibit also negative values over time.

\begin{figure}[!h]
\centering 
\includegraphics[width=.82\linewidth]{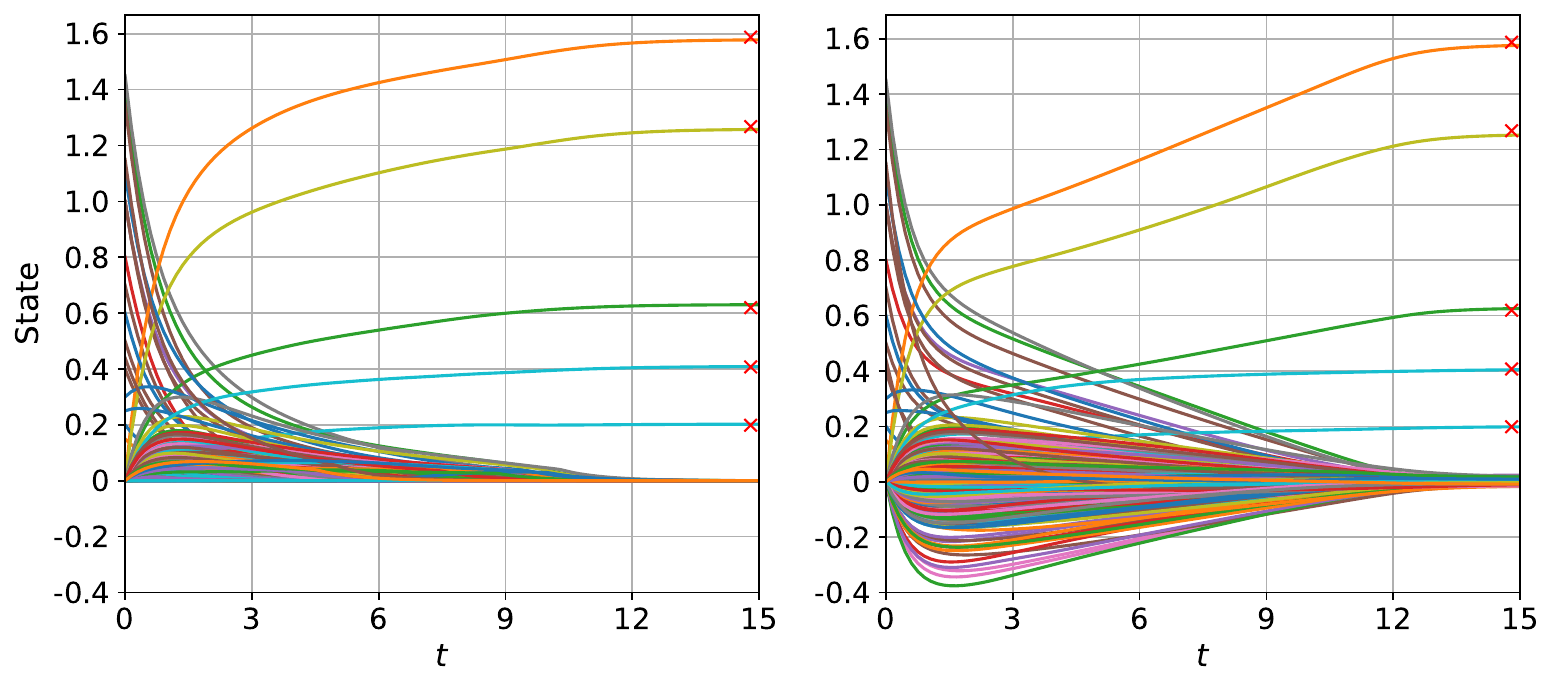}
\caption{Time evolution of the state/neuron variables of the proposed \pfcn~\eqref{eq:pfr-lca_x_dot} (leftward panel) and of the LCA~\eqref{eq:lca-soft-xi} (rightward panel) networks. The cross symbols are the non-zero elements of the sparse vector $y_0$. Both the \pfcn and the LCA converge to an equilibrium that is close to $y_0$. Note that, in accordance with Lemma~\ref{lem:positive_frcn}, the state variables of the \pfcn never become negative.}
\label{fig:sim-trajectories_pfcn_soft}
\end{figure}

To illustrate the global convergence behavior of the \pfcn~\eqref{eq:pfr-lca_x_dot}, we performed an additional set of simulations, this time with the \pfcn starting  from $20$ randomly generated initial conditions. Then, we randomly selected two neurons from the active and inactive sets and recorded their evolution. The result of this process is shown in Figure~\ref{fig:sim-multiple_trajectories}, which reports a projection of the phase plane defined by these nodes. Figure~\ref{fig:sim-multiple_trajectories} shows that the trajectories of the selected nodes converge to the equilibrium point from any of the chosen initial conditions. Specifically, in accordance with our results, the trajectories of the active neurons converge to positive values, while the trajectories of the inactive nodes converge to the origin.

\begin{figure}[!h]
\centering 
\includegraphics[width=.82\linewidth]{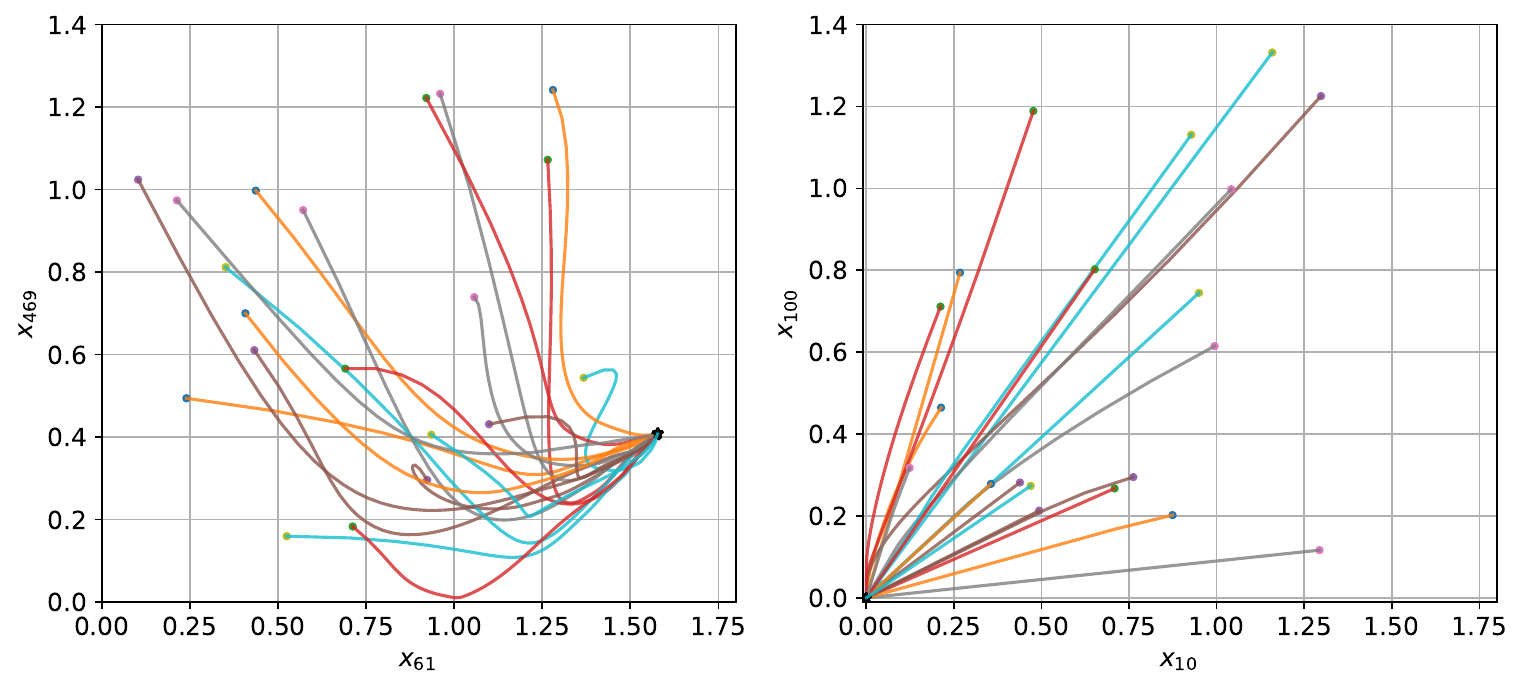}
\hspace{.05\linewidth}
\caption{Trajectories of two randomly chosen nodes of the \pfcn~\eqref{eq:pfr-lca_x_dot} from the active (leftward panel) and inactive (rightward panel) set in the planes defined by these two nodes, respectively. In the panels, the evolution is shown from $20$ randomly chosen initial conditions. In accordance with our results, the trajectories of the active neurons converge to positive values, while the trajectories of the inactive nodes converge to the origin.}
\label{fig:sim-multiple_trajectories}
\end{figure}

Finally, we performed an additional, exploratory, numerical study to investigate what happens when the activation function of the LCA is the shifted $\relu$. Even though the assumptions in~\cite{AB-JR-CJR:12} on the activation function exclude the use of the $\relu$ for the LCA, we decided to simulate this scenario to investigate if the LCA dynamics would become positive if the $\relu$ was used as activation function. Hence, for our last numerical study, we considered the following LCA dynamics:

\beq
\dot x(t) = - x(t) + \bigl(I_n + \Phi^\top \Phi\bigr)\relu\bigl(x(t)- \lambda \1_n\bigr) + \Phi^\top u(t),
\label{eq:lca-relu-x}
\eeq
with output $y(t) = \relu\bigl(x(t)- \lambda \1_n\bigr)$. In Figure~\ref{fig:sim-trajectories_pfcn_relu}  the time evolution of the state variables of the LCA~\eqref{eq:lca-relu-x} is shown. As apparent from the figure, even using the $\relu$ as activation function, the trajectories of the LCA states still exhibit  negative values over time.

\begin{SCfigure}[10][h!]
\centering
\includegraphics[width=0.5\linewidth]{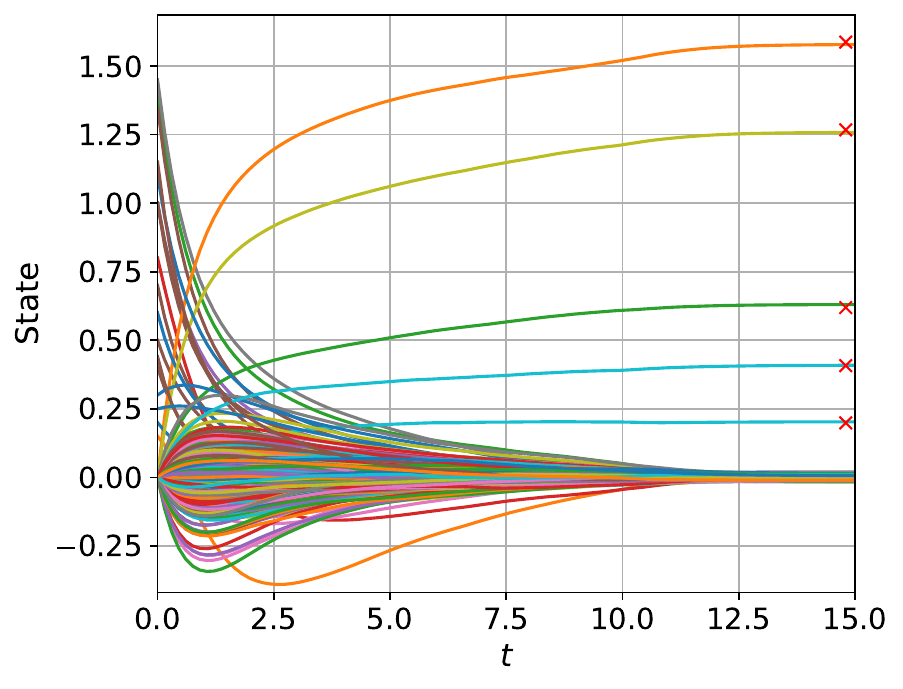}
\caption{Time evolution of the state variables of the LCA~\eqref{eq:lca-relu-x} with $\relu$ as activation function. The cross symbols are the non-zero elements of $y_0$. The LCA converges to an equilibrium close to $y_0$. Even using the $\relu$ as activation function, the trajectories of the LCA states still exhibit negative values over time.}
\label{fig:sim-trajectories_pfcn_relu}
\end{SCfigure}

\section{Methods}
\label{sec:methods}
We prove a number of results, from which Theorem~\ref{thm:GLin-ExpS_of_pfrlcn} follows, that  characterize convergence of nonlinear systems of the form of~\eqref{eq:dynamical_system} that are globally-weakly contracting and locally-strongly contracting (possibly, in different norms). We show that these systems, which also arise in flow dynamics, traffic networks, and primal-dual dynamics~\cite{SJ-PCV-FB:19q, FB:23-CTDS}, have a linear-exponential convergence behavior.

First, we give a general algebraic result on the inclusion relationship between balls computed with respect to different norms.
\smallskip

\begin{lem}[Inclusion between balls computed with respect to different norms]
\label{lem:ineq_balls_inclusions}
Given two norms $\norm{\cdot}_{\alpha}$ and $\norm{\cdot}_{\beta}$ on $\R^n$, for all $r>0$, it holds that
\begin{equation}
\label{ineq:balls:inclusions}
\ball{\beta}{r/\kab} \subseteq \ball{\alpha}{r} \subseteq  \ball{\beta}{r \kba},
\end{equation}
with $\kab$ and $\kba$ given in~\eqref{eq:equivalence_coeff_norms}.
\end{lem}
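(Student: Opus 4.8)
The plan is to prove the two inclusions separately, in each case taking an arbitrary point of the smaller ball and invoking one of the two norm-equivalence inequalities in~\eqref{eq:equivalence_coeff_norms} to certify its membership in the larger ball. Throughout, all three balls are understood to share the same center, say $x \in \R^n$, and the inequalities in~\eqref{eq:equivalence_coeff_norms} are applied not to a point $z$ itself but to the displacement vector $z - x$.

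For the left inclusion $\ball{\beta}{r/\kab} \subseteq \ball{\alpha}{r}$, I would fix an arbitrary $z \in \ball{\beta}{r/\kab}$, so that by definition $\norm{z-x}_\beta \leq r/\kab$. Applying the first inequality in~\eqref{eq:equivalence_coeff_norms} to $z-x$ then yields $\norm{z-x}_\alpha \leq \kab\,\norm{z-x}_\beta \leq \kab\cdot (r/\kab) = r$, which is exactly the condition $z \in \ball{\alpha}{r}$.

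For the right inclusion $\ball{\alpha}{r} \subseteq \ball{\beta}{r\kba}$, I would argue symmetrically: fix $z \in \ball{\alpha}{r}$, so $\norm{z-x}_\alpha \leq r$, and apply the second inequality in~\eqref{eq:equivalence_coeff_norms} to $z-x$, obtaining $\norm{z-x}_\beta \leq \kba\,\norm{z-x}_\alpha \leq r\kba$, so that $z \in \ball{\beta}{r\kba}$. Chaining the two inclusions gives~\eqref{ineq:balls:inclusions}.

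There is essentially no serious obstacle here: the result is a direct consequence of the defining inequalities of the equivalence coefficients, and the only points requiring care are to keep a common center for all three balls and to apply each inequality to the displacement $z-x$. In particular, no property of the norms beyond~\eqref{eq:equivalence_coeff_norms} is used — the minimality of $\kab$ and $\kba$ (hence their role in Definition~\ref{def:equivalence_ratio}) plays no part — so the chain of inclusions holds verbatim for \emph{any} valid pair of equivalence coefficients.
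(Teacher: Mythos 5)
Your proposal is correct and follows essentially the same route as the paper: both inclusions are obtained by unwinding the definition of a ball and applying the appropriate inequality from~\eqref{eq:equivalence_coeff_norms}, with no further ingredients. If anything, your version is slightly more careful than the paper's --- you keep the common center explicit and apply the inequalities to the displacement $z-x$, and your direct argument for $\ball{\beta}{r/\kab} \subseteq \ball{\alpha}{r}$ (namely $\norm{z-x}_\alpha \leq \kab\norm{z-x}_\beta \leq \kab\, (r/\kab) = r$) is cleaner than the paper's chain for that direction, which invokes $\kab\kba \geq 1$ and as written contains a typographical slip.
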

\begin{proof}
We start by proving the inequality $\ball{\alpha}{r} \subseteq \ball{\beta}{r \kba}$.
By definition of ball of radius $r$, for any $x\in \ball{\alpha}{r}$, we know that $\norm{x}_{\alpha}\leq{r}$. Also, we have
$$\frac{1}{\kba}\norm{x}_{\beta}\leq \frac{1}{\kba}\kba\norm{x}_{\alpha}\leq\norm{x}_{\alpha}\leq{r}.$$
Therefore $\norm{x}_{\beta}\leq{r}\kba$, so that $x\in \ball{\beta}{r\kba}$.

The inequality $\ball{\beta}{r/\kab} \subseteq \ball{\alpha}{r}$ follows directly from the above inequality and from the fact that $\kab \kba \geq 1$. Specifically, we have:
$$\norm{x}_{\alpha}\leq \kab\norm{x}_{\beta} \leq \kab r \kba \leq r.$$
\end{proof}

\smallskip

We are now ready to state the main result of this section.
\smallskip
\bt[Finite decay in finite time of globally-weakly and locally-strongly contracting systems]
\label{thm:fin_decay_fin_time_GWC_and_LSC_dynamics}
Let $\norm{\cdot}_{\loc}$ and $\norm{\cdot}_{\glo}$ be two norms on $\R^n$. Consider system~\eqref{eq:dynamical_system} with $\map{f}{\R_{\geq 0} \times \R^n}{\R^n}$ being a locally Lipschitz map satisfying the following assumptions
\begin{enumerate}[label=\textup{($A$\arabic*)}, leftmargin=1.4 cm,noitemsep]
\item \label{ass:globally-weakly_contractivity}
$f$ is weakly infinitesimally contracting on $\R^n$ with respect to $\norm{\cdot}_\glo$;
\item \label{ass:locally-strongly_contractivity}
$f$ is $\ce$-strongly infinitesimally contracting on a forward-invariant set $\mcS$
with respect to $\norm{\cdot}_\loc$;
\item \label{ass:eq_point}
$\xstar \in \mcS$ is an equilibrium point, i.e., $f(t,\xstar) = \0_n$, for all $t \geq 0$.
\end{enumerate}
Also, let $\ball{\glo}{r} \subset \mcS$ be the largest closed ball centered at $\xstar$ with radius $\radius>0$ with respect to  $\norm{\cdot}_{\glo}$.
Then, for each trajectory $x(t)$ starting from $x(0) \notin \mcS$ and for any {contraction factor} $0 < \rho < 1$, the distance along the trajectory decreases at worst linearly with an \emph{average linear decay rate}
\begin{equation}
\ds \cl = \frac{(1-\rho) \radius}{t_\rho},
\end{equation}
up to at most the \emph{linear-exponential crossing time}
 \begin{equation}
    \tld = \Bigceil{\frac{\glbnorm{x(0)-\xstar} - \radius}{(1-\rho)\radius}} t_\rho,
\end{equation}
when the trajectory enters $\ball{\glo}{r}$.
\et
\bigskip

\begin{proof}
Consider a trajectory $x(t)$ of~\eqref{eq:dynamical_system} starting from initial condition $x(0)\not\in\mcS$ and define the \emph{intermediate point} $\xtmp = \xstar+\radius \frac{x(0)-\xstar}{\glbnorm{x(0)-\xstar}{}}$, as in Figure~\ref{fig:contractivity-weak-localglobal-new}. Note that $\xtmp$ is a point on the boundary of $\ball{\glo}{\radius}$, since $\norm{\xtmp - \xstar}_\glo = \radius$.
Moreover, the points $\xstar$, $\xtmp$, and $x(0)$ lie on the same line segment, thus
\beq
\norm{x(0) - \xstar}_\glo = \norm{x(0) -\xtmp}_\glo + \norm{\xtmp - \xstar}_\glo.
\label{eq:aligned_points}
\eeq

\begin{SCfigure}[9][!ht]
\centering
\includegraphics[width=.52\linewidth]{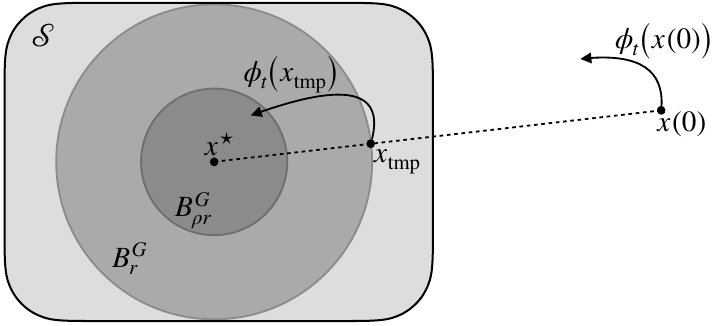}
\hspace{.01\linewidth}
\caption{Illustration of the set up for the proof of Th.~\ref{thm:fin_decay_fin_time_GWC_and_LSC_dynamics} with $\glo = \norm{\cdot}_{2}$. Given the equilibrium point $\xstar \in \mcS$, with $\mcS$ forward invariant set, we consider a trajectory $\odeflowtx{t}{x(0)}$ of~\eqref{eq:dynamical_system} starting from $x(0)\not\in\mcS$ and define the \emph{intermediate point} $\xtmp \in \ball{\glo}{\radius}$. After a time $t_\rho$ the trajectory starting at $\xtmp$ (which may exit $\ball{\glo}{\radius}$)  must enter $\ball{\glo}{\rho\radius}$, for $0<\rho<1$. Image reused with permission from~\cite{FB:23-CTDS}.}
\label{fig:contractivity-weak-localglobal-new}
\end{SCfigure}

Using the triangle inequality, we get
\begin{align*}
\glbnorm{ \odeflowtx{t}{x(0)} -\xstar} &\leq \glbnorm{ \odeflowtx{t}{x(0)} - \odeflowtx{t}{\xtmp}} + \glbnorm{ \odeflowtx{t}{\xtmp} - \xstar}.
\end{align*}
By Assumption~\ref{ass:globally-weakly_contractivity} and equality~\eqref{eq:aligned_points}, we know that $\glbnorm{\odeflowtx{t}{x(0)} - \odeflowtx{t}{\xtmp}} \leq \glbnorm{x(0) - \xtmp} = \glbnorm{x(0) - \xstar}-\radius$, thus
\begin{align*}
\glbnorm{ \odeflowtx{t}{x(0)} -\xstar} &\leq \glbnorm{x(0)-\xstar}-\radius + \glbnorm{ \odeflowtx{t}{\xtmp} -\xstar}.
\end{align*}
Next, we upper bound the term $\glbnorm{ \odeflowtx{t}{\xtmp} -\xstar}$. 
We note that, because each trajectory originating in $\ball{\glo}{\radius}$ remains in $\mcS$, the time required for each trajectory starting in $\ball{\glo}{\radius}$, to be inside $\ball{\glo}{\rho \radius}$ for the $\ce$-strongly contracting map $f$ is 
\begin{equation*}
t_\rho  = \frac{\ln(\ratiolg \rho^{-1})}{\ce}.
\end{equation*}
This follows by noticing that  
\begin{align*}
x(0)\in \ball{\glo}{r} \quad&\overset{\eqref{ineq:balls:inclusions}, \text{ $\supscr{2}{nd}$ inequality}}{\implies}\quad x(0)\in \ball{\loc}{r k_{\loc}^{\glo}}, \\
x(t_\rho)\in \ball{\glo}{\rho r} \quad&\overset{\eqref{ineq:balls:inclusions}, \text{ $\supscr{1}{st}$ inequality}}{\impliedby}\quad
x(t_\rho)\in \ball{\loc}{\rho r/k_{\glo}^{\loc}}.
\end{align*}
Thus, the time required for a trajectory starting in $\ball{\glo}{r}$ to be inside $\ball{\glo}{\rho r}$ is upper bounded by the time required for the trajectory to go from $\ball{\loc}{r k_{\loc}^{\glo}}$ to $\ball{\loc}{\rho r/k_{\glo}^{\loc}}$. In these balls, Assumption~\ref{ass:locally-strongly_contractivity} implies $\norm{x(t)}_{\loc}\leq\e^{-\ce t}\norm{x(0)}_{\loc}$ and so $t_\rho$ is determined by the equality $\e^{-\ce t_\rho}r k_{\loc}^{\glo} = \rho r/k_{\glo}^{\loc}$.

Therefore, at time $t_\rho$, we know $\odeflowtx{t_\rho}{\xtmp} \in \ball{\glo}{\rho\radius}$ and we have
\begin{align*}
\glbnorm{\odeflowtx{t_\rho}{x(0)} -\xstar} &\leq \glbnorm{x(0)-\xstar}-\radius + \rho \radius =\glbnorm{x(0)-\xstar}-(1-\rho) \radius.
\end{align*}
By iterating the above argument, it follows that after each interval of duration $t_\rho$, the distance $\glbnorm{x(t)-\xstar}$ has decreased by an amount $(1-\rho)\radius$ for each $x(t)$.
Therefore the average linear decay satisfies
\begin{equation}
\label{eq:ave-lin-decay-rate}
\cl := \frac{\text{variation in distance to $\xstar$}}{\text{variation in time}} = \frac{(1-\rho) \radius}{t_\rho} = \ce\radius \frac{1-\rho}{\ln(\ratiolg \rho^{-1})}.
\end{equation}
Hence, after at most a linear-exponential crossing time $ \tld := \Bigceil{\frac{\glbnorm{x(0)-\xstar} - \radius}{(1-\rho)\radius}}{t_\rho}$, the trajectory will be inside $\ball{\glo}{r} \subset \mcS$.
This concludes the proof.
\end{proof}

\smallskip

\begin{rem}
Assumptions~\ref{ass:locally-strongly_contractivity} and~\ref{ass:eq_point} of Theorem~\ref{thm:fin_decay_fin_time_GWC_and_LSC_dynamics} imply that for any $x(0) \in \mcS$, the distance $\norm{x(t) - \xstar}_\loc$ decreases exponentially with time with rate $\ce$. Specifically, for all $t \geq 0$ it holds that
\beq
\label{eq:bound_inside_the_ball}
\norm{x(t) - \xstar}_\loc \leq \e^{-\ce t} \norm{x(0) - \xstar}_\loc.
\eeq
\end{rem}


The next result, which establishes the linear-exponential convergence of system~\eqref{eq:dynamical_system}, follows from Theorem~\ref{thm:fin_decay_fin_time_GWC_and_LSC_dynamics}.

\smallskip
\begin{cor}[Linear-exponential decay of globally-weakly and locally-strongly contracting systems]
\label{cor:lin-exp-decay_GW-LS_contracting_systems}
Under the same assumptions and notations as in Theorem~\ref{thm:fin_decay_fin_time_GWC_and_LSC_dynamics}, for each $x(0) \notin \mcS$ and for any contraction factor $0 < \rho <1$, the distance $\norm{x(t) - \xstar}_\glo$ decreases linear-exponentially with time, in the sense that:
\begin{align}
\label{eq:lin_exp_decay}
\norm{x(t) - \xstar}_\glo 
&\leq 
\begin{cases}
\norm{x(0) - \xstar}_\glo +(1-\rho)r - \cl t , 
\qquad & \text{if }t \leq \tld ,  \\
\, \prodeqnorm_{\loc,\glo} \, r \, \e^{ - \ce (t - \tld)}
\quad & \text{if } t>\tld.
\end{cases}
\end{align}
\end{cor}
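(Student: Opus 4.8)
The plan is to establish the two branches of~\eqref{eq:lin_exp_decay} separately, reusing the machinery of Theorem~\ref{thm:fin_decay_fin_time_GWC_and_LSC_dynamics} for the linear branch and adding a short norm-conversion argument for the exponential branch. The key structural fact I would exploit first is that, since $\xstar$ is an equilibrium (assumption~\ref{ass:eq_point}), the constant map $t \mapsto \xstar$ is itself a trajectory of~\eqref{eq:dynamical_system}; applying the global weak contractivity~\ref{ass:globally-weakly_contractivity} to the pair $\odeflowtx{t}{x(0)}$ and $\xstar$ shows that $t \mapsto \glbnorm{\odeflowtx{t}{x(0)} - \xstar}$ is non-increasing. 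This monotonicity is what lets me upgrade the discrete, per-interval decay estimate of Theorem~\ref{thm:fin_decay_fin_time_GWC_and_LSC_dynamics} into a continuous linear upper bound.

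For the linear branch ($t \leq \tld$): Theorem~\ref{thm:fin_decay_fin_time_GWC_and_LSC_dynamics} shows that after each window of length $t_\rho$ the distance to $\xstar$ drops by at least $(1-\rho)\radius$, so at the sampling instants $t = k t_\rho$ one has $\glbnorm{\odeflowtx{k t_\rho}{x(0)} - \xstar} \leq \glbnorm{x(0)-\xstar} - k(1-\rho)\radius$. By the monotonicity just noted, on the whole interval $[(k-1)t_\rho, k t_\rho]$ the distance is bounded by its value at the left endpoint, i.e. by $\glbnorm{x(0)-\xstar} - (k-1)(1-\rho)\radius$. I would then check that the affine function $t \mapsto \glbnorm{x(0)-\xstar} + (1-\rho)\radius - \cl t$, whose slope $-\cl = -(1-\rho)\radius/t_\rho$ matches the average decrement, dominates this staircase bound on each such interval: evaluating it at the right endpoint $k t_\rho$ returns exactly $\glbnorm{x(0)-\xstar} - (k-1)(1-\rho)\radius$, and since the affine function is decreasing it stays above the staircase value throughout the interval. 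The offset $(1-\rho)\radius$ is precisely what absorbs the gap between the affine interpolant and the discrete estimates.

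For the exponential branch ($t > \tld$): Theorem~\ref{thm:fin_decay_fin_time_GWC_and_LSC_dynamics} guarantees $\odeflowtx{\tld}{x(0)} \in \ball{\glo}{\radius} \subset \mcS$. Since $\mcS$ is forward invariant and $f$ is $\ce$-strongly contracting on $\mcS$ with respect to $\norm{\cdot}_\loc$ (assumption~\ref{ass:locally-strongly_contractivity}), estimate~\eqref{eq:bound_inside_the_ball} applied with initial time $\tld$ gives $\locnorm{\odeflowtx{t}{x(0)} - \xstar} \leq \e^{-\ce(t - \tld)} \locnorm{\odeflowtx{\tld}{x(0)} - \xstar}$ for $t \geq \tld$. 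I would then pass between the two norms using the equivalence coefficients of~\eqref{eq:equivalence_coeff_norms}: first $\locnorm{\odeflowtx{\tld}{x(0)} - \xstar} \leq k_\loc^\glo \glbnorm{\odeflowtx{\tld}{x(0)} - \xstar} \leq k_\loc^\glo \radius$ (because the trajectory has entered $\ball{\glo}{\radius}$), and then $\glbnorm{\cdot} \leq k_\glo^\loc \locnorm{\cdot}$ on the left-hand side, which multiplies the two coefficients into the equivalence ratio $\prodeqnorm_{\loc,\glo}$ and yields $\glbnorm{\odeflowtx{t}{x(0)} - \xstar} \leq \prodeqnorm_{\loc,\glo}\, \radius\, \e^{-\ce(t-\tld)}$, exactly the claimed bound.

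The routine parts are the norm conversions; the step needing genuine care is the linear branch, where I must control the continuous trajectory between the discrete instants $k t_\rho$. Weak contractivity against the equilibrium trajectory supplies the required monotonicity cleanly, but one should double-check the bookkeeping of the offset and the endpoint evaluations, so that the affine bound is valid already at $t=0$ (where it reads $\glbnorm{x(0)-\xstar} + (1-\rho)\radius$, harmlessly larger than the true distance) and remains consistent with the exponential branch at the crossing time $\tld$.
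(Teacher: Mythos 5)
Your proposal is correct and follows essentially the same route as the paper's proof: both branches are derived from Theorem~\ref{thm:fin_decay_fin_time_GWC_and_LSC_dynamics}, with the linear bound coming from the per-interval decay estimate and the exponential bound from strong contractivity after the crossing time combined with norm equivalence. If anything, your write-up is more careful than the paper's—the monotonicity-plus-staircase argument justifying the continuous affine bound, and using the state at time $\tld$ (rather than $x(0)$) as the starting point of the exponential estimate, make explicit two steps the paper leaves implicit.
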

\begin{proof}
The result follows directly from Theorem~\ref{thm:fin_decay_fin_time_GWC_and_LSC_dynamics}. Indeed, given a trajectory $x(t)$ of~\eqref{eq:dynamical_system} starting from $x(0) \notin \mcS$, for all $t \leq \tld$, from Theorem~\ref{thm:fin_decay_fin_time_GWC_and_LSC_dynamics} we know that the distance $\norm{x(t) - \xstar}_\glo$ decreases linearly by an amount $(1-\rho)\radius$ with an average linear decay rate $\cl = {(1-\rho) \radius}/{t_\rho}$ towards $\ball{\glo}{r} \subset \mcS$, which implies the upper bound
$$
\norm{x(t) - \xstar}_\glo \leq \norm{x(0) - \xstar}_\glo + (1 - \rho)r - \cl t.
$$
Next, for all $t > \tld$ the trajectory $x(t)$ is inside $\ball{\glo}{r}$ and Assumption~\ref{ass:locally-strongly_contractivity}, i.e., $\ce$-strongly infinitesimally contractivity on $\mcS$, implies the bound
\[
\norm{\odeflowtx{t}{x(0)} -\xstar}_{\loc} \leq \norm{x(0) - \xstar}_{\loc}\,\e^{-\ce( t - \tld) }, \quad \forall t > \tld.
\]
Applying the equivalence of norms to the above inequality we have
\begin{align*}
\glbnorm{\odeflowtx{t}{x(0)} -\xstar} &\leq \prodeqnorm_{\loc, \glo} \, \norm{x(0) - \xstar}_{\glo} \, \e^{-\ce( t - \tld)}, \quad \forall t > \tld.
\end{align*}
Therefore, for all $t > \tld$ we have
\begin{align*}
\norm{x(t) - \xstar}_\glo := \glbnorm{\odeflowtx{t}{x(0)} -\xstar} \leq \prodeqnorm_{\loc, \glo} \, \radius \, \e^{-\ce( t - \tld)}.
\end{align*}
This concludes the proof.
\end{proof}
\smallskip
With the following Lemma, we give the explicit expression for the optimal contraction factor $\rho$ that maximizes the average linear decay rate $\cl$.

\smallskip

\begin{lem}[Optimal contraction factor]
\label{lem:optimal_contraction_factor}
Under the same assumptions and notations as in Theorem~\ref{thm:fin_decay_fin_time_GWC_and_LSC_dynamics}, for $\prodeqnorm_{\loc,\glo} > 1$ the contraction factor $\rho \in {]0,1[}$ that maximize the average linear decay rate $\cl$ is 
\beq
\label{eq:optimal_contraction_factor}
\bar \rho(\prodeqnorm_{\loc,\glo}) = - \frac{1}{W_{-1}(- \e^{-1} \prodeqnorm_{\loc,\glo}^{-1})},
\eeq
where $W_{-1}(\cdot)$ is the branch of the Lambert function $W(\cdot)$ \footnote{The Lambert function $W(\cdot)$ is a multivalued function defined by the branches of the converse relation of the function $f(x) = x\e^x$. See~\cite{RMC-GHG-DEGH-DJJ-DEK:96} for more details.}
satisfying $W(x) \leq -1$, for all $x \in [-1/\e,0[$.
\end{lem}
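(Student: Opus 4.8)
The plan is to treat $\rho \mapsto \cl(\rho)$ as a scalar function on the open interval $]0,1[$ and to locate its unique interior maximizer. Recalling from~\eqref{eq:ave-lin-decay-rate} that $\cl = \ce \radius \, h(\rho)$ with $h(\rho) := \frac{1-\rho}{\ln(\ratiolg \rho^{-1})}$, and since $\ce\radius > 0$ is a fixed positive constant, maximizing $\cl$ is equivalent to maximizing $h$. First I would record the boundary behavior: for $\ratiolg > 1$ both the numerator $1-\rho$ and the denominator $\ln(\ratiolg/\rho)$ are strictly positive on $]0,1[$, and $h(\rho) \to 0$ as $\rho \to 0^+$ (numerator $\to 1$, denominator $\to +\infty$) and as $\rho \to 1^-$ (numerator $\to 0$, denominator $\to \ln \ratiolg > 0$). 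Hence $h$ is positive in the interior, vanishes at both endpoints, and therefore attains its maximum at an interior stationary point.

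Second, I would compute $h'(\rho)$ by the quotient rule and set the numerator to zero. Writing the denominator as $\ln \ratiolg - \ln\rho$, the stationarity condition simplifies to $\frac{1-\rho}{\rho} = \ln(\ratiolg/\rho)$, i.e.\ $\frac{1}{\rho} - 1 = \ln \ratiolg - \ln \rho$. To confirm this critical point is unique (hence the global maximizer on $]0,1[$), I would show that the map $\rho \mapsto \bigl(\tfrac{1}{\rho} - 1\bigr) - \ln(\ratiolg/\rho)$ is strictly decreasing on $]0,1[$, since its derivative equals $\frac{\rho-1}{\rho^2} < 0$; as it is positive near $\rho=0$ and negative at $\rho=1$ (where it equals $-\ln\ratiolg < 0$), it has exactly one zero.

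The crux is to rewrite this transcendental equation through the Lambert $W$ function. Substituting $w := -1/\rho$ (so $\rho = -1/w$ and $\ln\rho = -\ln(-w)$, valid since $w<0$), the stationarity condition becomes $-w - \ln(-w) = 1 + \ln\ratiolg$; exponentiating after rearrangement yields $w\, \e^{w} = -\e^{-1}\ratiolg^{-1}$, so $w = W(-\e^{-1}\ratiolg^{-1})$. Because $\ratiolg > 1$ forces $-\e^{-1}\ratiolg^{-1} \in \,]-\e^{-1},0[$, the argument lies in the domain where $W$ is real and double-valued, and the correct branch is dictated by $\rho \in \,]0,1[$, which forces $w = -1/\rho < -1$: this is precisely the range of $W_{-1}$, whereas $W_0$ returns values in $]-1,0[$. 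Therefore $w = W_{-1}(-\e^{-1}\ratiolg^{-1})$ and $\bar\rho = -1/w = -1/W_{-1}(-\e^{-1}\ratiolg^{-1})$, as claimed. The main obstacle I anticipate is the branch bookkeeping: one must verify both that the argument stays in $[-\e^{-1},0[$ so that $W_{-1}$ is real-valued, and that the constraint $\rho < 1$ (equivalently $w < -1$) selects $W_{-1}$ rather than $W_0$.
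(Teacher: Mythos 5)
Your proof is correct and follows essentially the same route as the paper's: both set the derivative of $\cl$ with respect to $\rho$ to zero, reduce stationarity to the transcendental equation $\rho\ln(\rho) - \rho\bigl(1+\ln(\ratiolg)\bigr) + 1 = 0$, solve it via the Lambert function, and use the constraint $\rho \in {]0,1[}$ to discard the $W_0$ solution in favor of the $W_{-1}$ branch. Your additions --- checking that the objective $(1-\rho)/\ln(\ratiolg\,\rho^{-1})$ vanishes at both endpoints so the maximum is interior, proving uniqueness of the critical point by monotonicity, and deriving the Lambert form through the explicit substitution $w = -1/\rho$ rather than citing the known solution formula for $x\ln(x) + ax + b = 0$ --- make the argument more self-contained than the paper's but do not change the approach.
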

\begin{proof}
To maximize the linear decay rate $\cl$ we need to solve the optimization problem
\begin{align}
\max_{0<\rho<1} \frac{1-\rho}{\ln(\prodeqnorm_{\loc,\glo}) - \ln(\rho)}.
\end{align}
We compute
\begin{align*}
\frac{d}{d\rho} \frac{1-\rho}{\ln(\prodeqnorm_{\loc,\glo}) - \ln(\rho)} &= \frac{\rho \ln(\rho) - \rho\bigl(1+\ln(\prodeqnorm_{\loc,\glo})\bigr) + 1}{\rho\bigl(\ln(\prodeqnorm_{\loc,\glo}) - \ln(\rho)\bigr)^2} = 0,
\end{align*}
which holds if and only if
\begin{align}
\label{eq:derivative_rho}
\rho \ln(\rho) - \rho\bigl(1+\ln(\prodeqnorm_{\loc,\glo})\bigr) + 1 = 0.
\end{align}
Note that the equality~\eqref{eq:derivative_rho} is a transcendental equation of the form $x \ln(x) + ax + b = 0$, whose solution is known to be the value $x = \frac{- b }{ W_{0}(- b \e^{a})}$ if $- b \e^{a} \geq 0$ and the two values $x = \frac{-b}{ W_{0}(- b \e^{a})}$ and $x = \frac{- b }{ W_{-1}(- b \e^{a})}$ if $-1/\e \leq - b \e^{a} < 0$, where $W_0(\cdot)$ is the branch satisfying $W(x) \geq -1$, and $W_{-1}(\cdot)$ is the branch satisfying $W(x) \leq -1$.

In our case it is $b = 1$ and $a = -(1+\ln(\prodeqnorm_{\loc,\glo}))$, thus $ - b \e^{a}  =  - \e^{-(1+\ln(\prodeqnorm_{\loc,\glo}))} \in ]-\frac{1}{\e}, 0[$. Therefore, the solutions of the equality~\eqref{eq:derivative_rho} are $\rho = - \frac{1}{ W_{0}(- \e^{-1} \prodeqnorm_{\loc,\glo}^{-1})}$ and $\rho = -\frac{1}{ W_{-1}(- \e^{-1} \prodeqnorm_{\loc,\glo}^{-1})}$. Being $0 < \rho < 1$, the only admissible solution is $\rho = -\frac{1}{ W_{-1}(- \e^{-1} \prodeqnorm_{\loc,\glo}^{-1})}$, thus the thesis.
\end{proof}

\begin{SCfigure}[10][!ht]
\centering
\includegraphics[width=.6\linewidth]{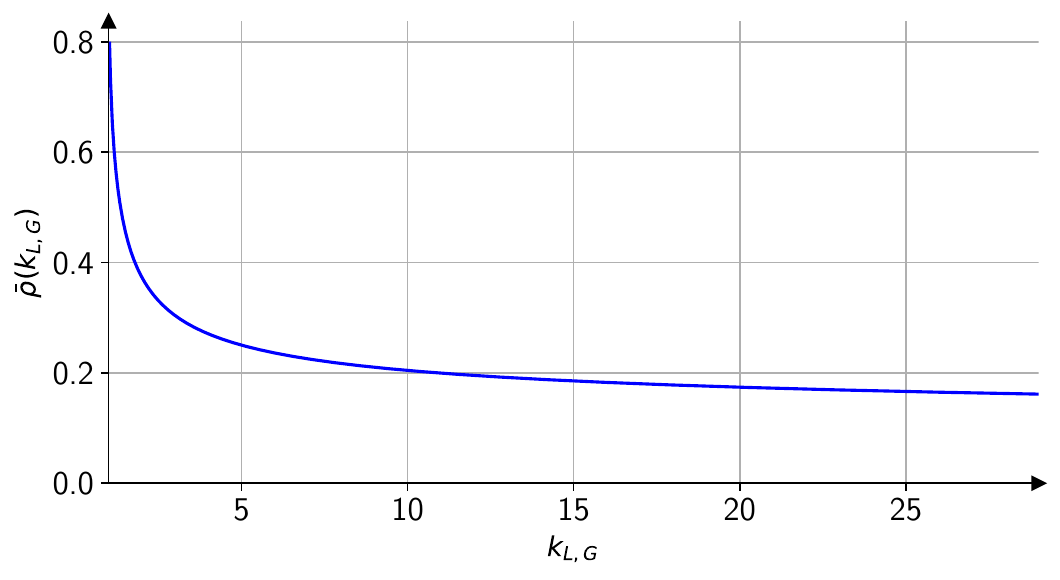}
\hspace{.02\linewidth}
\caption{Plot of the optimal contraction factor $\bar \rho(\prodeqnorm_{\loc,\glo})$ given by equation~\eqref{eq:optimal_contraction_factor}.}
\label{fig:optimal_rho}
\end{SCfigure}


\subsection{Proof of Theorem~\ref{thm:GLin-ExpS_of_pfrlcn}}\label{sec:proof_theorem}
We are now ready to give the proof of Theorem~\ref{thm:GLin-ExpS_of_pfrlcn}, which follows from the results of this Section. Indeed, given the assumptions of Theorem~\ref{thm:GLin-ExpS_of_pfrlcn}: (i) Theorem~\ref{thm:weak-contractivity_pfrlcn} implies that the \pfcn is weakly infinitesimally contracting on $\R^n$ with respect to $\norm{\cdot}_{2, \qw}$; (ii) Theorem~\ref{thm:loc_exp_stab_loc_contractivity} implies that the \pfcn is $c_{\eps}$-strongly infinitesimally contracting on $\ball{\qs_{\eps}}{\radiusstrong}$ with respect to $\norm{\cdot}_{2,\strongweight}$. Hence, Theorem~\ref{thm:GLin-ExpS_of_pfrlcn} follows from Corollary~\ref{cor:lin-exp-decay_GW-LS_contracting_systems} with $\mcS = \ball{\qs_{\eps}}{\radiusstrong}$, $\norm{\cdot}_{\glo} = \norm{\cdot}_{2, \qw}$ and $\norm{\cdot}_{\loc} = \norm{\cdot}_{2,\strongweight}$.

\section{Conclusions}  
In this paper, we proposed and analyzed two families of continuous-time firing-rate neural networks: the firing-rate competitive network, \fcn, and the positive firing-rate competitive network, \pfcn, to tackle sparse reconstruction and positive sparse reconstruction problems, respectively.
These networks arise from a top/down normative framework that aims to provide a biologically-plausible explanation for how neural circuits solve sparse reconstruction and other composite optimization problems.
This framework is based upon the theory of proximal operators for composite optimization and leads to continuous-time firing-rate neural networks that are therefore interpretable.

We first introduced a result relating the optimal solutions of the SR and positive SR problems to the equilibria of the \fcn and \pfcn (Lemma~\ref{lem:opt_sol-eq_point} and Corollary~\ref{cor:pfcn_equilibria}). Crucial for the \pfcn is the fact that this is a positive system (see Lemma~\ref{lem:positive_frcn}). This, in turn, can be useful to effectively model both excitatory and inhibitory synaptic connections in a biologically plausible way. Then, we investigated the convergence properties of the proposed networks: we provided an explicit convergence analysis for the \pfcn and gave rigorous conditions to extend the analysis to the \fcn.  Specifically, we showed that (i) the \pfcn~\eqref{eq:pfr-lca_x_dot} is weakly contracting on $\R^n$ (Theorem~\ref{thm:weak-contractivity_pfrlcn}); (ii) if the dictionary $\Phi$ is RIP, then the equilibrium point of the \pfcn is locally exponentially stable and, in a suitably defined norm, it is also strongly contracting in a neighborhood of the equilibrium (Theorem~\ref{thm:loc_exp_stab_loc_contractivity}). These results lead to Theorem~\ref{thm:GLin-ExpS_of_pfrlcn} that establishes linear-exponential convergence of the \pfcn.

To derive our key findings, we also devised a number of instrumental results, interesting {\em per se}, providing: (i) algebraic results on the log-norm of triangular matrices; (ii) convergence analysis for a broader class of non-linear dynamics (globally-weakly and locally-strongly contracting systems) that naturally arise from the study of the \fcn and \pfcn. Finally, we illustrated the effectiveness of our results via numerical experiments.

With our future research, we plan to extend our results to design networks able to tackle the sparse {\em coding} problem~\cite{CSNB-WG:16, POH:02, POH:03}, which involves {learning} features to reconstruct a given stimulus. We expect that tackling the sparse coding problem will lead to the study of RNNs with both neural and synaptic dynamics~\cite{DWD-JJH:92, LK-ML-JJES-EKM:20, VC-FB-GR:22g}. In this context, we plan to explore if Hebbian rules~\cite{DOH:49, WG-WK:02, VC-FB-GR:22g} can be effectively used to learn the dictionary. Moreover, it would be interesting to tackle SR problems with more general and non-convex sparsity-inducing cost functions~\cite{VC-SMF-DR:23}.
Finally, given the relevance and wide-ranging applications of globally-weakly and locally-strongly contracting systems, we will explore if tighter linear-exponential convergence bounds can be devised.

\section*{Acknowledgement}
The authors wish to thank Eduardo Sontag for stimulating conversations about contraction theory.

This work was in part supported by AFOSR project FA9550-21-1-0203 and NSF Graduate Research Fellowship under Grant No. 2139319. Giovanni Russo wishes to acknowledge financial support by the European Union - Next Generation EU, under PRIN 2022 PNRR, Project “Control of Smart Microbial Communities for Wastewater Treatment”.

\appendices
\section{Weight Matrix $\qw$ in Lemma~\ref{thm:contractivity_fnn} and Theorem~\ref{thm:weak-contractivity_pfrlcn}} 
\label{apx:weight_matrix_D}
We start with giving the explicit expression of the matrix $D \in \R^{n\times n}$ in Lemma~\ref{thm:contractivity_fnn}~\cite{VC-AG-AD-GR-FB:23c}.
To this purpose, we first recall that for any symmetric matrix $W \in \R^{n \times n}$, it is always possible to decompose $W$ into the form $W = U\Lambda U^\top$, where $U\in \R^{n\times n}$ is the orthogonal matrix whose columns are the eigenvectors of $W$, and $\Lambda = [\lambda] \in \R^{n\times n}$ is diagonal with $\lambda \in \R^n$ being the vector of the eigenvalues of $W$. Next, to define the weight matrix $\qw$, we need to introduce the function $\map{\fsplit}{]{-}\infty,1]}{[2,+\infty[}$ defined by
$
\fsplit(z):=2\big(1+\sqrt{1-z}\big), \forall z \in {]{-}\infty,1]}.
$
Then, letting $\fsplit(\Lambda):=\diag{(\fsplit(\lambda_1),\dots,\fsplit(\lambda_n))}$, it is
\beq \label{eq:weight_matrix_wc}
\qw := U \fsplit(\Lambda) U^\top\succ 0.
\eeq
The expression of the matrix $\qw$ in Theorem~\ref{thm:weak-contractivity_pfrlcn} follows from~\eqref{eq:weight_matrix_wc} when $W= \bigl(I_n-\Phi^\top \Phi\bigr)$. 

\section{On the Positiveness of the \pfcn}
In this appendix, we give a formal proof of the fact that the \pfcn~\eqref{eq:pfr-lca_x_dot} is a positive system. That is, the state variables are never negative, given a non-negative initial state. In order words, the positive orthant $\R_{\geq 0}^n$ is forward invariant. 
First, we recall the following standard:
\smallskip
\bd[Forward invariant set]
A set $\mcS \subset \R^n$ is \emph{forward invariant} with respect to the system~\eqref{eq:dynamical_system} if for every $x(0) \in \mcS$ it holds $\odeflowtx{t}{x(0)} \in \mcS$, for all $t \geq 0$.
\ed
\smallskip
Then, we give the following:
\smallskip
\begin{lem}[On the positiveness of the \pfcn]
\label{lem:positive_frcn}
The \pfcn~\eqref{eq:pfr-lca_x_dot} is a positive system.
\end{lem}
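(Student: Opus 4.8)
The plan is to show that the positive orthant $\R^n_{\geq 0}$ is forward invariant for the \pfcn by a direct integrating-factor argument that exploits the non-negativity of the $\relu$ activation, rather than appealing to a general invariance theorem. First I would record the well-posedness facts: since $\relu$ is globally Lipschitz with constant $1$ and the argument of $\relu$ in~\eqref{eq:pfr-lca_x_dot} is affine in $x$, the map $\fpfcn$ is globally Lipschitz. Hence for every $x(0)$ there is a unique, absolutely continuous solution $t \mapsto \odeflowtx{t}{x(0)}$ defined for all $t \geq 0$, and the componentwise identity $\dot x_i(t) = \bigl(\fpfcn(x(t))\bigr)_i$ holds for almost every $t$.

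The key observation is a componentwise lower bound on the vector field. Writing the $i$-th component of~\eqref{eq:pfr-lca_x_dot} as
\[
\dot x_i(t) = -x_i(t) + \relu\bigl((I_n-\trasp{\Phi}\Phi)x(t) + \trasp{\Phi}u(t) - \lambda\1_n\bigr)_i,
\]
and using that every component of $\relu(\cdot)$ is non-negative, I obtain $\dot x_i(t) \geq -x_i(t)$ for a.e. $t \geq 0$ and every index $i$. This is the only place where the structure of the activation enters: because the $\relu$ contribution is non-negative, the coordinate $x_i$ can decay no faster than the linear rate $-x_i$, and in particular cannot cross zero from above.

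I would then integrate this differential inequality with the integrating factor $\e^{t}$. The map $t \mapsto \e^{t} x_i(t)$ is absolutely continuous and satisfies
\[
\der{}{t}\bigl(\e^{t} x_i(t)\bigr) = \e^{t}\bigl(\dot x_i(t) + x_i(t)\bigr) \geq 0 \quad \text{for a.e. } t,
\]
so $\e^{t} x_i(t)$ is non-decreasing. Consequently $\e^{t} x_i(t) \geq x_i(0)$, i.e. $x_i(t) \geq \e^{-t} x_i(0)$ for all $t \geq 0$. Taking $x(0) \in \R^n_{\geq 0}$ yields $x_i(t) \geq 0$ for every $i$ and every $t \geq 0$, which is precisely forward invariance of $\R^n_{\geq 0}$ in the sense just defined, and hence positivity of the \pfcn.

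The main subtlety — rather than a genuine obstacle — is the non-smoothness of $\fpfcn$ on the measure-zero set $\nondiffset{f}$ where $\relu$ has a kink, so that $\dot x_i$ exists only almost everywhere. This is handled cleanly by working with absolutely continuous solutions (guaranteed by the global Lipschitzness of $\fpfcn$) and by noting that the monotone integrating-factor argument requires the inequality only a.e. An equivalent route would be to invoke the Nagumo sub-tangentiality condition for a closed convex set: it suffices that whenever $x_i = 0$ one has $\bigl(\fpfcn(x)\bigr)_i = \relu(\cdot)_i \geq 0$, so the field never points strictly outward across a face of the orthant. I would nonetheless present the integrating-factor argument as the primary proof, since it is fully self-contained and avoids appealing to non-smooth invariance theorems.
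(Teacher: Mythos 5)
Your proof is correct, but it takes a genuinely different route from the paper. The paper's own proof invokes Nagumo's theorem: forward invariance of $\R^n_{\geq 0}$ is reduced to the sub-tangentiality condition that $\bigl(\fpfcn(x)\bigr)_i \geq 0$ whenever $x \in \R^n_{\geq 0}$ has $x_i = 0$, which holds immediately because at such a point the $i$-th component of the field equals $\relu(\cdot)_i \geq 0$ -- this is exactly the alternative route you sketch in your closing remarks. Your primary argument instead integrates the componentwise bound $\dot x_i \geq -x_i$ to get the explicit estimate $x_i(t) \geq \e^{-t}x_i(0)$, which is self-contained (no appeal to an invariance theorem for closed convex sets) and strictly stronger than invariance: it quantifies that each coordinate can at worst decay exponentially toward zero, and it makes transparent that the only structural property used is non-negativity of the activation, so the argument applies verbatim to any \fcn whose proximal operator is non-negative. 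One simplification you could make: since $\fpfcn$ is globally Lipschitz, it is in particular continuous, so (for continuous input $u$) solutions are classical $C^1$ solutions and the ODE holds for every $t$, not merely almost everywhere; the non-differentiability of $\fpfcn$ on the measure-zero set $\nondiffset{f}$ affects the Jacobian-based contraction analysis elsewhere in the paper, but is irrelevant to existence and regularity of solutions, so your a.e.\ caveats, while harmless, are unnecessary.
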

\begin{proof}
To prove the statement we prove that the positive orthant $\R_{\geq 0}^n$ is forward invariant for $\fpfcn$.
We recall that, by applying Nagumo's Theorem~\cite{MN:1942}, the positive orthant is forward invariant for a vector field $f$ if and only if
\beq
\label{eq:nagumo_pos}
f_i(x) \geq 0 \quad \forall x \in \R_{\geq 0}^n \text{ such that } x_i =0.
\eeq
Now, let us consider the \pfcn written in components
\[
\dot{x_i} = -x_i + \relu{\Bigl(-\sum_{j = 1, j \neq i}^n \trasp{\Phi_i}\Phi_j x_j(t) + \trasp{\Phi_i} u(t) - \lambda \Bigr)} = \subscr{f}{PFCN,i}(x), \quad i \in \until{n}.
\]
Then, for all $x \in \R_{\geq 0}^n$ such that $x_i =0$ we have
\[
\subscr{f}{PFCN,i}(x) = \relu{\Bigl(-\sum_{j = 1, j \neq i}^n \trasp{\Phi_i}\Phi_j x_j(t) + \trasp{\Phi_i} u(t) - \lambda \Bigr)} \geq 0,
\]
for each $i$. This concludes the proof.
\end{proof}
\label{app:positive_system}
\section{A Primer on Proximal Operators}
\label{apx:proximal_operator}
In this appendix, we provide a brief overview of proximal operators and outline the main properties needed for our analysis.
We start by giving a number of preliminary notions.

Given $\map{g}{\R^n}{\realextended}:= [{-}\infty,+\infty]$, the \emph{epigraph} of $g$ is the set $\operatorname{epi}(g) = \setdef{(x,y) \in \R^{n+1}}{g(x) \leq y}$.
\smallskip
\bd[Convex, proper, and closed function]
A function $\map{g}{\R^n}{\realextended}$ is
\bi
\item \emph{convex} if $\operatorname{epi}(g)$ is a convex set;
\item \emph{proper} if its value is never $-\infty$ and there exists at least one $x \in \R^n$ such that $g(x) < \infty$;
\item \emph{closed} if it is proper and $\operatorname{epi}(g)$ is a closed set.
\ei
\ed
Next, we define the proximal operator of $g$, which is a map that takes a vector $x \in \R^n$ and maps it into a subset of $\R^n$, which can be either empty, contain a single element, or be a set with multiple vectors.
\bd[Proximal Operator]
\label{apx:def:prox_operator}
The \emph{proximal operator} of a function $\map{g}{\R^n}{\realextended}$ with parameter $\gamma>0$, $\map{\prox{\gamma g}}{\R^n}{\R^n}$, is the operator given by
\begin{equation}
\prox{\gamma g}(x) = \argmin_{z \in \R^n} g(z) + \frac{1}{2 \gamma}\|x - z\|_2^2, \quad \forall x \in \R^n.
\end{equation}
\ed

Of particular interest for our analysis is the case when the $\prox{\gamma g}$ is a singleton. The next Theorem~\cite[Theorem 6.3]{AB:17} provides conditions under which the $\prox{\gamma g}$ exists and is unique.
\bt[Existence and uniqueness]
\label{apx:thm:ccp_uniq_prox_operator}
Let $\map{g}{\R^n}{\realextended}$ be a convex, closed, and proper function. Then $\prox{\gamma g}(x)$ is a singleton for all $x \in \R^n$.
\et

The above result shows that for a convex, closed, and proper function $g$, the proximal operator $\prox{\gamma g}(x)$ exists and is unique for all $x \in \R^n$.

Next, we recall a result on the calculus of proximal mappings~\cite[Section 6.3]{AB:17}.
\begin{lem}[Prox of separable functions]
\label{apx:lem:prox_of_separable_functions}
Let $\map{g}{\R^n}{\R}$ be a convex, closed, proper, and separable function, that is $g(x) = \sum_{i=1}^n g_i(x_i)$, with $\map{g_i}{\R}{\R}$ being convex, closed, and proper functions. Then 
$$(\prox{\gamma g}(x))_i = \prox{\gamma g_i}(x_i), \quad i \in \until{n}.$$
\end{lem}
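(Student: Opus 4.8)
The plan is to exploit the fact that both pieces of the proximal minimization are separable across coordinates, so that the $n$-dimensional optimization decouples into $n$ independent scalar problems. First I would write out the definition of the proximal operator,
$$\prox{\gamma g}(x) = \argmin_{z \in \R^n} \Bigl( g(z) + \frac{1}{2\gamma}\normtwo{x-z}^2 \Bigr),$$
and then substitute both the separability hypothesis $g(z) = \sum_{i=1}^n g_i(z_i)$ and the elementary identity $\normtwo{x-z}^2 = \sum_{i=1}^n (x_i - z_i)^2$. This rewrites the objective as the single sum $\sum_{i=1}^n \bigl( g_i(z_i) + \frac{1}{2\gamma}(x_i - z_i)^2 \bigr)$.

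The key step is then to observe that the $i$-th summand depends only on the single variable $z_i$, so minimizing the total sum over $z \in \R^n$ is equivalent to minimizing each summand independently over $z_i \in \R$. Concretely, since lowering the value of one coordinate's summand never affects the others, the global minimum is attained exactly when every coordinate simultaneously attains its own scalar minimum. Hence the $i$-th component of the minimizer is $\argmin_{z_i \in \R} \bigl( g_i(z_i) + \frac{1}{2\gamma}(x_i - z_i)^2 \bigr)$, which is by definition $\prox{\gamma g_i}(x_i)$, giving the claimed coordinatewise identity.

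To make this rigorous and to obtain a genuine equality of \emph{points} rather than of sets, I would invoke Theorem~\ref{apx:thm:ccp_uniq_prox_operator}. Since each $g_i$ is convex, closed, and proper, each scalar proximal operator $\prox{\gamma g_i}(x_i)$ is a singleton; moreover $g$ itself inherits convexity, closedness, and properness from its summands, so $\prox{\gamma g}(x)$ is likewise a singleton. This removes any ambiguity in the $\argmin$ and turns the decoupling argument into a bona fide equality of vectors.

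I do not anticipate a serious obstacle here. The only point requiring care is the "separable minimization decouples" argument, which is conceptually immediate but should be stated cleanly rather than waved through, together with the appeal to uniqueness that upgrades an identity of $\argmin$-sets to an identity of the unique minimizing vectors.
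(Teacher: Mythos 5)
Your proposal is correct. Note that the paper itself does not prove this lemma at all: it is recalled verbatim from~\cite[Section 6.3]{AB:17}, and your argument (coordinatewise decoupling of the separable objective, plus Theorem~\ref{apx:thm:ccp_uniq_prox_operator} applied to each $g_i$ and to $g$ to turn the identity of $\argmin$-sets into an identity of unique minimizers) is precisely the standard proof given in that reference, so there is nothing to fix.
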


\subsection{Proximal Gradient Method}
\label{apx_proximal_gradient_method}
Based on the use of proximal operators, {\em proximal gradient method} (see, e.g.,~\cite{NP-SB:14}) can be devised to iteratively solve a class of composite (possibly non-smooth) convex problems
\beq
\label{apx:eq:composite_problem}
\min_{x \in \R^n} f(x) + g(x),
\eeq
where $\map{f}{\R^n}{\R}$, $\map{g}{\R^n}{\realextended}$ are convex, proper and closed functions, and $f$ is differentiable. At its core, the proximal gradient method updates the estimate of the solution of the optimization problem by computing the proximal operator of $\alpha g$, where $\alpha > 0$ is a step size, evaluated at the difference between the current estimate and the gradient of $\alpha f$ computed at the current estimate. That is,
\[
x^{k+1} := \prox{\alpha^k g}{\bigl(x^k - \alpha^k \nabla f(x^k)\bigr)}.
\]

Notably, this method has been recently extended and generalized to a continuous-time framework~\cite{SHM-MRJ:21, AD-VC-AG-GR-FB:23f}, resulting in solving a continuous-time FNN. In this case, the iteration becomes the \emph{continuous-time proximal gradient dynamics}
\beq
\label{apx:eq:prox:gradient}
\dot x = - x + \prox{\gamma g}{\bigl(x - \gamma \nabla f(x)\bigr)},
\eeq
with $\gamma >0$.

Finally, we note that for $f(x) := \frac{1}{2}\big\|u- \Phi y\big\|^2_2$ and $g(x) := \lambda S(y)$ the composite optimization problem~\eqref{apx:eq:composite_problem} is the SR problem~\eqref{eq:sparse_approx_C}. Moreover, we get the \fcn~\eqref{eq:frlca-x_dot_general} by setting $\gamma = 1$ in~\eqref{apx:eq:prox:gradient}.

\subsection{Proximal Operator for $\lambda S$ in the Positive SR Problem}
We provide the explicit computation of the proximal operator of the sparsity-inducing term of the positive SR problem~\eqref{eq:positive_E_lasso_unconstrained}.

\begin{lem}
\label{apx:lem:prox_of_relu}
Consider $\lambda >0$ and let $\map{S_1}{\R^n}{\R}$, $S_1(y) = \|y\|_1 + \frac{1}{\lambda}\iota_{\R^n_{\geq0}}(y)$, $\forall y \in \R^n$. Then
$$\prox{\lambda S_1}(y) = \relu(y - \lambda \1_n).$$
\end{lem}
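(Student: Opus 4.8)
The plan is to exploit separability and reduce the claim to a one-dimensional computation. First I would observe that $\lambda S_1(y) = \lambda\norm{y}_1 + \iota_{\R^n_{\geq0}}(y) = \sum_{i=1}^n \lambda s_1(y_i)$, where $\lambda s_1(z) = \lambda z + \iota_{\R_{\geq0}}(z)$ is convex, closed, and proper; hence $\lambda S_1$ is itself convex, closed, and proper, so Theorem~\ref{apx:thm:ccp_uniq_prox_operator} guarantees that $\prox{\lambda S_1}(y)$ is a singleton. Since $\lambda S_1$ is separable across indices, Lemma~\ref{apx:lem:prox_of_separable_functions} (with $\gamma = 1$) lets me compute the prox coordinatewise, $(\prox{\lambda S_1}(y))_i = \prox{\lambda s_1}(y_i)$, so it suffices to evaluate the scalar proximal operator.

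Next I would compute $\prox{\lambda s_1}$ directly from the definition. Because the indicator term enforces $z\geq0$, on which $s_1$ reduces to a linear function, the scalar prox becomes the constrained quadratic program
\[
\prox{\lambda s_1}(w) = \argmin_{z \geq 0}\Bigl(\lambda z + \tfrac12 (w - z)^2\Bigr).
\]
The objective is strictly convex with unconstrained stationary point $z = w - \lambda$, obtained by setting its derivative $\lambda - (w - z)$ to zero. I would then distinguish two cases: if $w > \lambda$ the stationary point is feasible and is the minimizer; if $w \leq \lambda$ the objective is nondecreasing on $[0,\infty)$, so the minimizer lies at the boundary $z = 0$. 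In both cases the minimizer equals $\max(w - \lambda, 0) = \relu(w - \lambda)$.

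Reassembling the coordinates then yields $\prox{\lambda S_1}(y) = \relu(y - \lambda \1_n)$, as claimed. I expect the only delicate point to be the boundary case $w \leq \lambda$: one must verify that, although the stationary point $w - \lambda$ is infeasible, the minimum over $\R_{\geq0}$ is genuinely attained at the origin. This is where the interaction between the $\ell_1$ penalty and the positivity constraint matters — a naive application of the soft-threshold formula would be incorrect here — but it is settled cleanly by convexity and monotonicity of the scalar objective on the feasible ray.
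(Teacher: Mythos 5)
Your proposal is correct and follows essentially the same route as the paper: both reduce $\prox{\lambda S_1}$ via separability (Lemma~\ref{apx:lem:prox_of_separable_functions}) to the scalar problem $\argmin_{z\geq 0}\bigl(\lambda z + \tfrac12(w-z)^2\bigr)$ and resolve it by the same two-case analysis, yielding $\relu(w-\lambda)$. Your explicit appeal to Theorem~\ref{apx:thm:ccp_uniq_prox_operator} for single-valuedness and your careful justification of the boundary case $w\leq\lambda$ are slightly more detailed than the paper's computation, but the argument is the same.
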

\begin{proof}
We start by noticing that $\lambda S_1$ is separable across indices and, for any $y_i \in \R$, we have $\lambda s_1(y_i) = \lambda y_i + \iota_{\R_{\geq0}}(y_i)$. Hence, Lemma~\ref{apx:lem:prox_of_separable_functions} implies that
the computation of the proximal operator of $\lambda S_1$ reduces to computing scalar proximals of $\lambda s_1(y_i)$. This can be done as follows:
\begin{align*}
\prox{\lambda s_1}(y_i) &= \argmin_{z \in \R} \frac{1}{2}(y_i - z)^2 + \lambda z + \iota_{\R_{\geq0}}(z) =
\begin{cases}
0 & \textup{ if } y_i \leq \lambda, \\
y_i - \lambda & \textup{ if } y_i > \lambda.
\end{cases}
\end{align*}
Note that, by definition of (shifted) $\relu$ function this is exactly $\relu(y_i - \lambda)$. In turn, this proves the statement.
\end{proof}

\section{The $\ell_2$ Logarithmic Norm of Upper Triangular Block Matrices}
\label{axp:l2_log_norm_of_upper_triangular_block_matrices}
We present an algebraic result of the $\ell_2$ log-norm of upper triangular block matrices. This result is useful for determining the rate and norm with respect to which the \pfcn exhibits strong infinitesimal contractivity, as stated in Theorem~\ref{thm:loc_exp_stab_loc_contractivity}. The following Lemma is inspired by~\cite[E2.28]{FB:23-CTDS}. We also refer to~\cite{GR-MDB-EDS:10a} for a result on the log-norm of these triangular matrices using non-Euclidean norms.

\smallskip

\begin{lem}[The $\ell_2$ logarithmic norm of upper triangular block matrices]
\label{lem:mu2-for-upper-triang-matrix}
Consider the block matrix
\begin{equation*}
A =
\begin{bmatrix}
A_{11} & A_{12} \\
0 & A_{22}
\end{bmatrix} \; \in\R^{(n+m)\times(n+m)}.
\end{equation*}
For all $\eps>0$ and for $P_\eps =
\begin{bmatrix}
\eps P_1 & 0 \\
0 & \eps^{-1} P_2
\end{bmatrix}$ with $P_1=P_1^\top \succ 0$ and $P_2=P_2^\top \succ 0$, we have
\begin{equation}
\lognorm{A}{2,P_\eps^{1/2}} \leq
\max\big\{ \lognorm{A_{11}}{2,P_1^{1/2}}, \lognorm{A_{22}}{2,P_2^{1/2}}\big\} + \eps
\norm{P_1^{1/2} A_{12} P_2^{-1/2}}_{2}.
\end{equation}
\end{lem}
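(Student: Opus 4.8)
The plan is to unwind the definition of the weighted log-norm and reduce everything to the ordinary $\ell_2$ log-norm of an explicitly conjugated block-triangular matrix. By definition $\lognorm{A}{2,P_\eps^{1/2}} = \lognorm{P_\eps^{1/2}AP_\eps^{-1/2}}{2}$, so the first step is to carry out the similarity transformation $P_\eps^{1/2}AP_\eps^{-1/2}$ in block form. Since $P_\eps$ is block diagonal and positive definite, its square root is $P_\eps^{1/2}=\operatorname{diag}\bigl(\eps^{1/2}P_1^{1/2},\,\eps^{-1/2}P_2^{1/2}\bigr)$, and a direct multiplication shows that the diagonal $\eps$-factors cancel while the upper-right block picks up a factor $\eps^{1/2}\cdot\eps^{1/2}=\eps$. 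Concretely, I would verify that
\[
P_\eps^{1/2}AP_\eps^{-1/2}
=\begin{bmatrix} P_1^{1/2}A_{11}P_1^{-1/2} & \eps\,P_1^{1/2}A_{12}P_2^{-1/2}\\ 0 & P_2^{1/2}A_{22}P_2^{-1/2}\end{bmatrix}
=:\begin{bmatrix} \tilde A_{11} & \tilde A_{12}\\ 0 & \tilde A_{22}\end{bmatrix}.
\]

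The second step exploits the subadditivity of the $\ell_2$ log-norm, $\lognorm{X+Y}{2}\le\lognorm{X}{2}+\lognorm{Y}{2}$, which is immediate from Weyl's inequality applied to symmetric parts. I would split the conjugated matrix as $D+N$, where $D=\operatorname{diag}(\tilde A_{11},\tilde A_{22})$ is its block-diagonal part and $N$ is the strictly-upper part whose only nonzero block is $\tilde A_{12}$ in the top-right corner. Subadditivity then yields $\lognorm{P_\eps^{1/2}AP_\eps^{-1/2}}{2}\le\lognorm{D}{2}+\lognorm{N}{2}$, and it remains to handle the two terms separately.

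For the block-diagonal part, the symmetric part $(D+D^\top)/2$ is itself block diagonal, so its largest eigenvalue is the maximum of those of the two diagonal blocks; hence $\lognorm{D}{2}=\max\{\lognorm{\tilde A_{11}}{2},\lognorm{\tilde A_{22}}{2}\}$, and unwinding the weighted-norm notation identifies these with $\lognorm{A_{11}}{2,P_1^{1/2}}$ and $\lognorm{A_{22}}{2,P_2^{1/2}}$. For the off-diagonal part, the key step is to bound the log-norm by the spectral norm, $\lognorm{N}{2}\le\normtwo{N}$, and to observe that for a matrix whose only nonzero block is $\tilde A_{12}$ one has $\normtwo{N}=\normtwo{\tilde A_{12}}=\eps\,\normtwo{P_1^{1/2}A_{12}P_2^{-1/2}}$. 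Combining the three estimates gives exactly the claimed inequality.

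I expect no genuine obstacle here: the argument is routine once the conjugation is written down, and the only real choices are (i) to use subadditivity of $\mu_2$ rather than attempting to diagonalize the full symmetric part, and (ii) to bound $\lognorm{N}{2}$ by $\normtwo{N}$ rather than computing it exactly. I note in passing that the exact value would be the sharper $\frac{1}{2}\normtwo{\tilde A_{12}}$, so the stated bound is deliberately loosened by dropping the factor $\frac{1}{2}$; I would nonetheless keep the cruder $\normtwo{N}$ estimate, since it matches the statement and is entirely sufficient for the downstream application in Theorem~\ref{thm:loc_exp_stab_loc_contractivity}.
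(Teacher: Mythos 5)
Your proposal is correct and follows essentially the same route as the paper's proof: conjugate by $P_\eps^{1/2}$ to get the block form with the $\eps$-scaled off-diagonal block, split into the block-diagonal part (whose $\ell_2$ log-norm is the maximum of the blocks' log-norms) and the strictly upper part, and bound the latter's log-norm by its spectral norm. Your parenthetical observation that $\lognorm{N}{2}$ equals exactly $\tfrac{1}{2}\normtwo{\tilde A_{12}}$ (so the bound could be tightened by a factor of two) is also correct, though the paper does not exploit it.
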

\begin{proof}
We compute
\begin{align*}
\lognorm{A}{2,P_\eps^{1/2}} 
&=\Bigglognorm{ \begin{bmatrix} P_1^{1/2} A_{11} P_1^{-1/2} & \eps P_1^{1/2} A_{12} P_2^{-1/2}\\
0 & P_2^{1/2} A_{22}P_2^{-1/2} \end{bmatrix}}{2} \nonumber\\
&= \Bigglognorm{ \begin{bmatrix} P_1^{1/2} A_{11} P_1^{-1/2} & 0  \\ 0 & P_2^{1/2} A_{22} P_2^{-1/2} \end{bmatrix} + \begin{bmatrix} 0 & \eps P_1^{1/2} A_{12} P_2^{-1/2} \\ 0 & 0 \end{bmatrix}}{2} \\
&\leq \Bigglognorm{ \begin{bmatrix} P_1^{1/2} A_{11} P_1^{-1/2} & 0 \\ 0 & P_2^{1/2} A_{22} P_2^{-1/2} \end{bmatrix}}{2} + \eps \Biggnorm{\begin{bmatrix} 0 & P_1^{1/2} A_{12} P_2^{-1/2} \\ 0 & 0 \end{bmatrix}}_{2},
\end{align*}
where the last inequality follows by applying the translation property of the log-norm and the inequality $\mu(B) \leq \norm{B}$, for all matrix $B$. From the LMI characterization of the $\ell_2$ logarithmic norm, we obtain
\begin{equation*}
\Bigglognorm{
\begin{bmatrix} P_1^{1/2} A_{11}P_1^{-1/2}  &  0 \\ 0 & P_2^{1/2} A_{22}P_2^{-1/2} \end{bmatrix} }{2} =\max\left\{ \lognorm{A_{11}}{2,P_1^{1/2}}, \lognorm{A_{22}}{2,P_2^{1/2}}  \right\}.
\end{equation*}
The claim then follows by noting that
$\Biggnorm{\begin{bmatrix} 0 & P_1^{1/2} A_{12} P_2^{-1/2} \\ 0 & 0 \end{bmatrix}}_{2} = \norm {P_1^{1/2} A_{12} P_2^{-1/2}}_{2}$.
\end{proof}

Next, we give a specific result for a particular case of the matrix $A$ (which has the same form as the Jacobian of \pfcn computed at the equilibrium). In this particular case, we are able to determine and specify the matrices $P_1$ and $P_2$.

\smallskip

\begin{cor}
\label{cor:mu2_bound-for-Df}
Consider the block matrix
\begin{equation*}
B =
\begin{bmatrix}
- B_{11} & B_{12} \\
0 & - I_{m}
\end{bmatrix} \; \in\R^{(n+m)\times(n+m)},
\end{equation*}
with $B_{11} = B_{11}^\top \succ 0$ satisfying $\subscr{\lambda}{min}(B_{11}) \geq l$, with 
$l \in {]0,1]}$. Then, for all $\eps > 0$ and for $Q_\eps =
\begin{bmatrix}
\eps I_{n} & 0 \\
0 & \eps^{-1} I_{m}
\end{bmatrix}$, we have
\begin{equation}
\label{eq:lognorm(D)}
\lognorm{B}{2,Q_\eps} \leq - \bigl(l - \eps^2 \norm{B_{12}}_{2}\bigr).
\end{equation}
\end{cor}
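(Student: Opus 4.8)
The plan is to obtain this corollary as a direct specialization of Lemma~\ref{lem:mu2-for-upper-triang-matrix}. First I would match the block structure of $B$ to that of the matrix $A$ in the lemma by setting $A_{11} = -B_{11}$, $A_{12} = B_{12}$, and $A_{22} = -I_m$, and I would take the inner weights to be trivial, $P_1 = I_n$ and $P_2 = I_m$ (both symmetric positive definite, as the lemma requires).

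The one point that needs care is the relationship between the weight $Q_\eps$ appearing in the corollary and the weight $P_\eps^{1/2}$ appearing in the lemma. Since the lemma's bound is stated for the square-root weight, I would invoke it with its free parameter set to $\eps^2$ rather than $\eps$: with $P_{\eps^2} = \mathrm{diag}(\eps^2 I_n,\, \eps^{-2} I_m)$ one has $P_{\eps^2}^{1/2} = \mathrm{diag}(\eps I_n,\, \eps^{-1} I_m) = Q_\eps$, so that $\wlognorm{2}{Q_\eps}{B} = \wlognorm{2}{P_{\eps^2}^{1/2}}{B}$. With these choices the lemma yields
\[
\wlognorm{2}{Q_\eps}{B} \leq \max\bigl\{ \lognorm{-B_{11}}{2},\, \lognorm{-I_m}{2}\bigr\} + \eps^2 \normtwo{B_{12}},
\]
which already accounts for the appearance of $\eps^2$ (rather than $\eps$) on the off-diagonal term in the final bound.

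It then remains to evaluate the two diagonal log-norms. Since $B_{11}$ is symmetric, $\lognorm{-B_{11}}{2} = \subscr{\lambda}{max}(-B_{11}) = -\subscr{\lambda}{min}(B_{11}) \leq -l$ by the hypothesis $\subscr{\lambda}{min}(B_{11}) \geq l$, while $\lognorm{-I_m}{2} = -1$. Because $l \in {]0,1]}$ we have both $-\subscr{\lambda}{min}(B_{11}) \leq -l$ and $-1 \leq -l$, hence the maximum of the two is at most $-l$. Substituting gives $\wlognorm{2}{Q_\eps}{B} \leq -l + \eps^2 \normtwo{B_{12}} = -\bigl(l - \eps^2 \normtwo{B_{12}}\bigr)$, which is exactly~\eqref{eq:lognorm(D)}.

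I do not expect any genuinely hard step: the whole argument is a specialization plus two elementary eigenvalue computations. The only thing to be vigilant about is the bookkeeping of the weight, i.e.\ recognizing that $Q_\eps = P_{\eps^2}^{1/2}$ so that the lemma must be applied with parameter $\eps^2$; overlooking this would produce a spurious coefficient of $\eps$ in place of $\eps^2$ on the off-diagonal term, and the one-line use of $l \leq 1$ to collapse the maximum to $-l$ should not be skipped.
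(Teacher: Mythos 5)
Your proof is correct and follows essentially the same route as the paper: apply Lemma~\ref{lem:mu2-for-upper-triang-matrix} with $P_1 = I_n$, $P_2 = I_m$, evaluate the diagonal log-norms as $\lognorm{-B_{11}}{2} = -\subscr{\lambda}{min}(B_{11}) \leq -l$ and $\lognorm{-I_m}{2} = -1$, and use $l \leq 1$ to collapse the maximum to $-l$. Your explicit bookkeeping that $Q_\eps = P_{\eps^2}^{1/2}$, so that the lemma must be invoked with parameter $\eps^2$, is precisely the substitution the paper performs implicitly when it writes the $\eps^2$ coefficient on the off-diagonal term directly.
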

\begin{proof}
By applying Lemma~\ref{lem:mu2-for-upper-triang-matrix} to the block matrix $B$, for all $\eps > 0$ we have
\begin{align*}
\lognorm{B}{2,Q_\eps} & \leq
\max\big\{ \lognorm{-B_{11}}{2,I_{n}}, \lognorm{- I_{m}}{2,I_{m}}\big\} + \eps^2
\norm{I_{n} B_{12} I_{m}}_{2}\\
&\leq \max\big\{ - l , - 1 \big\} +  \eps^2\norm{B_{12}}_{2} = - l  + \eps^2\norm{B_{12}}_{2}.
\end{align*}
This concludes the proof.
\end{proof}

\smallskip
\begin{rem}
The result in Corollary~\ref{cor:mu2_bound-for-Df} implies that:
\begin{enumerate}
\item if $\norm{B_{12}}_{2} = 0$, then $\lognorm{B}{2,Q_\eps} < 0$ for all $\eps > 0$;
\item if $\norm{B_{12}}_{2} \neq 0$, then $\lognorm{B}{2,Q_\eps} < 0$, for all $\ds \eps \in {]0, \sqrt{l/\norm{B_{12}}_{2}}[}$.
\end{enumerate}
\end{rem}

\smallskip

\subsubsection{Expression of $\qs_\eps$ and $\ce$ in Theorem~\ref{thm:loc_exp_stab_loc_contractivity}}
Corollary~\ref{cor:mu2_bound-for-Df} enables the computation of the rate and the norm with respect to which the \pfcn is strongly infinitesimally contracting, as stated in Theorem~\ref{thm:loc_exp_stab_loc_contractivity}. In fact, the Jacobian of $\fpfcn$ computed at the equilibrium, given by equation~\eqref{eq:DflcaPFR(xstar)}, is in the form of the matrix $B$ in Corollary~\ref{cor:mu2_bound-for-Df} with $n = \na$, $m = \nia$, $B_{11} := \subscr{\trasp{\Phi}}{a}\subscr{\Phi}{a} \succ 0$, $ l = 1 - \delta \in {]0,1]}$, and $B_{12} := -\subscr{\trasp{\Phi}}{a}\subscr{\Phi}{ia}$.
Therefore the \pfcn is strongly infinitesimally contracting w.r.t. the norm $\norm{\cdot}_{2,S_\eps}$ with rate $\ce$, where
\beq
\label{eq:c_exp}
\begin{aligned}
\text{(i) } \qs_\eps := Q_\eps \text{ and } \ce &= 1 - \delta, \forall \eps > 0, &\text{ if } \norm{\subscr{\trasp{\Phi}}{a}\subscr{\Phi}{ia}}_{2} = 0;\\
\text{(ii) } \qs_\eps := Q_\eps \text{ and } \ce &= 1 - \delta - \eps^2 \norm{\subscr{\trasp{\Phi}}{a}\subscr{\Phi}{ia}}_{2}, \forall \ds \eps \in {]0, \sqrt{(1 - \delta)/\norm{\subscr{\trasp{\Phi}}{a}\subscr{\Phi}{ia}}_{2}}[}, &\text{ if } \norm{\subscr{\trasp{\Phi}}{a}\subscr{\Phi}{ia}}_{2} \neq 0.
\end{aligned}
\eeq

\bibliographystyle{plainurl+isbn}
\bibliography{alias, Main, FB, Veronica}
\end{document}